\newtheorem{theorem}{Theorem}
\newtheorem{lemma}{Lemma}
\newtheorem{corollary}{Corollary}
\newtheorem{definition}{Definition}
\newtheorem*{rep@theorem}{\rep@title}
\newcommand{\newreptheorem}[2]{%
\newenvironment{rep#1}[1]{%
 \def\rep@title{#2 \ref{##1}}%
 \begin{rep@theorem}}%
 {\end{rep@theorem}}}
\theoremstyle{definition}
\newcommand{\mO}{\mathcal{O}}
\newcommand{\Pclass}{\mathsf{P}}
\newcommand{\NP}{\mathsf{NP}}
\newcommand{\MA}{\mathsf{MA}}
\newcommand{\QMA}{\mathsf{QMA}}
\newcommand{\QCMA}{\mathsf{QCMA}}
\newcommand{\PP}{\mathsf{PP}}
\newcommand{\FNP}{\mathsf{FNP}}
\newcommand{\FBQP}{\mathsf{FBQP}}
\newcommand{\LH}{\textup{LH}}
\newcommand{\CLDM}{\textup{CLDM}}
\newcommand{\LEDMV}{\textup{LEDMV}}
\newcommand{\APXSIM}{\textup{APX-SIM}}
\newcommand{\SAT}{\textup{SAT}}
\newcommand{\poly}{\mathrm{poly}}
\newcommand{\kb}[2]{|#1\rangle\langle#2|}
\newcounter{protocolcounter}
\renewcommand{\theprotocolcounter}{\arabic{protocolcounter}}
\newenvironment{protocol}[1][]
{%
  \noindent \begin{minipage}{\textwidth}
    \begin{mdframed}[
      linewidth=0.8pt,
      roundcorner=5pt,
      backgroundcolor=white!,
      innertopmargin=10pt,
      innerbottommargin=10pt,
      innerleftmargin=10pt,
      innerrightmargin=10pt,
      skipabove=\topsep,
      skipbelow=\topsep,
    ]
      \refstepcounter{protocolcounter}
      \begin{center}
        \textbf{Protocol \theprotocolcounter:} #1
        \linebreak
      \end{center} 
}%
{%
    \end{mdframed}
  \end{minipage}
}
\newcounter{custalgocounter}
\renewcommand{\thecustalgocounter}{\arabic{custalgocounter}}
\newenvironment{custalgo}[1][]
{%
  \noindent \begin{minipage}{\textwidth}
    \begin{mdframed}[
      linewidth=0.8pt,
      roundcorner=5pt,
      backgroundcolor=white!,
      innertopmargin=10pt,
      innerbottommargin=10pt,
      innerleftmargin=10pt,
      innerrightmargin=10pt,
      skipabove=\topsep,
      skipbelow=\topsep,
    ]
      \refstepcounter{custalgocounter}
      \begin{center}
        \textbf{Algorithm \thecustalgocounter:} #1
        \linebreak
      \end{center} 
}%
{%
    \end{mdframed}
  \end{minipage}
}
\crefname{protocolcounter}{Protocol}{Protocols}
\Crefname{protocolcounter}{Protocol}{Protocols}
\crefname{custalgocounter}{Algorithm}{Algorithms}
\Crefname{custalgocounter}{Algorithm}{Algorithms}
\title{\huge Finding quantum partial assignments by search-to-decision reductions}
\author{Jordi Weggemans\thanks{Email: \href{mailto:jrw@cwi.nl}{jrw@cwi.nl}.}}
\affil{QuSoft \& CWI, Amsterdam, the Netherlands}
\begin{document}
\maketitle

\begin{abstract}
\noindent In computer science, many search problems are reducible to decision problems, which implies that finding a solution is as hard as deciding whether a solution exists. A quantum analogue of search-to-decision reductions would be to ask whether a quantum algorithm with access to a $\QMA$ oracle can construct $\QMA$ witnesses as quantum states. By a result from Irani, Natarajan, Nirkhe, Rao, and Yuen (CCC '22), it is known that this does not hold relative to a quantum oracle, unlike the cases of $\NP$, $\MA$, and $\QCMA$ where search-to-decision relativizes.

We prove that if one is not interested in the quantum witness as a quantum state but only in terms of its partial assignments, i.e.~the \textit{reduced density matrices}, then there exists a classical polynomial-time algorithm with access to a $\QMA$ oracle that outputs approximations of the density matrices of a near-optimal quantum witness, for any desired constant locality and inverse polynomial error. Our construction is based on a circuit-to-Hamiltonian mapping that approximately preserves near-optimal $\QMA$ witnesses and a new $\QMA$-complete problem, \textit{Low-energy Density Matrix Verification}, which is called by the $\QMA$ oracle to adaptively construct approximately consistent density matrices of a low-energy state.
\end{abstract}

\section{Introduction}
\label{sec:intro}
Decision (or promise) problems are arguably the central objects of study in computational complexity theory. While resolving a decision problem provides information about the \textit{existence} of a solution, it does not provide the solution itself. Fortunately, \emph{search problems}, where the task is to output an actual solution, are often reducible to their related decision problems. In this context, one generally considers \textit{Turing reductions}: here, one has access to an oracle capable of solving a class of decision problems, which is then used as a subroutine to solve the desired search problem.

As an example, consider a formula $\phi$ corresponding to a Boolean satisfiability ($\SAT$) problem on $n$ bits, and assume that we have access to an $\NP$ oracle. Under the assumption that $\phi$ is satisfiable, one can find a solution $x^*$ such that $\phi(x^*)=1$ in the following way: one queries the $\NP$ oracle adaptively to ask whether $\phi$ is satisfiable under the extra constraint that a certain subset of variables takes on specific values, i.e., under a fixed partial assignment. Every query to the oracle yields one bit of information about some $x^*$, and thus, after $n$ queries, the algorithm has found a solution.\footnote{It can return any satisfying assignment if the solution is not unique.} This strategy generally works for any problem in $\NP$ and can also be used to calculate the optimal value of an optimization problem up to exponential accuracy using binary search~\cite{krentel1986complexity}.

In~\cite{irani2022quantum}, Irani, Natarajan, Nirkhe, Rao, and Yuen studied whether a similar result holds in a quantum setting, where the goal is to output a \emph{quantum state} as a $\QMA$ witness, as opposed to a classical string. To extend the $\SAT$ example to the quantum case, one can consider its quantum generalization in terms of the local Hamiltonian problem ($\LH$). Here, the input is a Hermitian operator $H$ on $n$ qubits that can be efficiently written down as a sum of local terms, each acting non-trivially on only a subset of the qubits, and two parameters $a$ and $b$. The task is then to decide whether the ground state energy (its smallest eigenvalue) is $\leq a$ or $\geq b$. When $b-a = 1/\poly(n)$, the local Hamiltonian problem is $\QMA$-complete~\cite{Kitaev2002ClassicalAQ}. The question now is whether a quantum algorithm with access to a $\QMA$ oracle can prepare the ground state (the eigenstate corresponding to its smallest eigenvalue) of $H$ as a quantum state.

As pointed out in~\cite{irani2022quantum}, it seems unclear how to adapt the above strategy for $\NP$ to the local Hamiltonian problem (or any other $\QMA$-complete problem), because of the following two issues:
\begin{enumerate}[label=(\roman*)]
\item the description size complexity of a quantum state on $n$ qubits is generally exponential in $n$;
\item there does not appear to be a natural way of conditioning a quantum state on a partial assignment.
\end{enumerate}
It turns out that with a $\PP$-oracle, one can avoid this partial assignment strategy and generate $\QMA$ witnesses by making only a single quantum query~\cite{irani2022quantum}. Moreover,~\cite{irani2022quantum} shows that relative to a quantum oracle, $\QMA$ fails to have search-to-decision reductions, contrasting with some related classes where the witnesses are \emph{classical}. For instance, $\NP$, $\MA$, and $\QCMA$ all have search-to-decision reductions relative to all oracles. So what is possible with a $\QMA$-oracle?

\subsection{Results}
Going back to the local Hamiltonian problem, we observe that the full quantum state in fact contains more information than is needed; since the Hamiltonian is local, it suffices to have sufficiently good approximations of all $k$-local \emph{density matrices} of a low-energy state to compute the energy, provided we know that the density matrices are approximately \emph{consistent} with some global state. Constant-locality density matrices do not suffer from point (i) above, as there are only a polynomial number of them and each has a polynomially-sized description (for inverse polynomial accuracy). However, it is well-known that it is again $\QMA$-complete to check if all density matrices are consistent with a global quantum state~\cite{liu2007consistency,broadbent2022qma}.

We show that with access to a $\QMA$ oracle, a quantum analogue of the adaptive partial assignment strategy is possible for density matrices of low-energy states, which can be ensured to be approximately consistent. This demonstrates that point (ii) has a natural quantum manifestation for the class $\QMA$ when density matrices of low-energy states of local Hamiltonians are concerned. This is captured by the following (informal) theorem:
\begin{theorem}[Informal, from~\cref{thm:H_alg_dens} and~\cref{cor:H_alg_dens_deterministic}] For $k,q \in \mathbb{N}$ constant, we have that for any $k$-local Hamiltonian $H$, there exists a polynomial-time classical algorithm that makes queries to a $\QMA$ oracle and outputs a set of $q$-local density matrices that are at least arbitrarily (inverse-polynomially) close in trace distance to the density matrices of a state with energy arbitrarily (inverse-polynomially) close to the ground state energy.
\label{thm:1_informal}
\end{theorem}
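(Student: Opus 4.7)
The plan is to introduce a new $\QMA$-complete problem, Low-energy Density Matrix Verification (LEDMV), and use it as the subroutine of an adaptive greedy search over an $\epsilon$-net of $q$-qubit density matrices. Formally, an instance of LEDMV consists of a local Hamiltonian $H$, an energy threshold $a$, a list of subsets $(S_1,\ldots,S_m)$ each of size at most $q$ together with target density matrices $(\sigma_{S_1},\ldots,\sigma_{S_m})$, and a precision parameter $\epsilon$. YES instances admit a state $\ket\psi$ with $\bra\psi H\ket\psi\le a$ and $\|\text{Tr}_{\overline{S_j}}(\proj\psi)-\sigma_{S_j}\|_1\le\epsilon$ for all $j$, while NO instances fail both bounds simultaneously even after relaxing them by some $1/\poly(n)$ slack. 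Containment in $\QMA$ follows from a standard verifier that runs phase estimation on $H$ and performs SWAP-test-based marginal comparisons on polynomially many copies of the witness, and hardness from $\LH$ is immediate by taking the list of marginals to be empty. The role of the circuit-to-Hamiltonian mapping alluded to in the abstract is to make the LEDMV promise robust enough --- near-optimal $\QMA$ witnesses must be approximately preserved --- that the verifier's energy and marginal tolerances remain polynomially bounded.

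Given the LEDMV oracle, the classical algorithm iterates over the $O(n^q)$ many $q$-subsets $S_i$ in some fixed order, and for each $S_i$ runs through an $\epsilon$-net $\mathcal{N}$ on $q$-qubit density matrices (of size $\poly(1/\epsilon)$ since $q$ is constant), querying the oracle on the LEDMV instance defined by $H$, an energy threshold $a$ set slightly above a binary-searched estimate of the ground state energy, and the partial list $(\sigma_{S_1},\ldots,\sigma_{S_{i-1}},\sigma)$. As soon as some $\sigma\in\mathcal{N}$ is accepted, the algorithm fixes $\sigma_{S_i}:=\sigma$ and continues. The total query complexity is $O(n^q\cdot|\mathcal{N}|)=\poly(n)$, and the estimate of the ground state energy itself is obtainable by binary search on LEDMV with an empty marginal list.

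Correctness follows by induction on $i$. The base case $i=1$ uses the true ground state $\ket{\psi^*}$ to witness YES for the element of $\mathcal{N}$ closest to $\text{Tr}_{\overline{S_1}}(\proj{\psi^*})$. For the inductive step, once $\sigma_{S_1},\ldots,\sigma_{S_{i-1}}$ are fixed, the previous accepting query guarantees a witness state $\ket{\psi_{i-1}}$ with energy $\le a$ whose marginals are $\epsilon$-close to these fixed targets; the net element closest to $\text{Tr}_{\overline{S_i}}(\proj{\psi_{i-1}})$ then turns $\ket{\psi_{i-1}}$ into a YES-witness for the $i$-th query. In particular, the witness of the very last accepting query is a single global low-energy state whose $q$-local marginals are simultaneously $\epsilon$-close (in trace distance) to every output $\sigma_{S_i}$ --- exactly the claim of the theorem.

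The main obstacle is the joint calibration of all parameters: the net granularity $\epsilon$, the energy threshold $a$, and the LEDMV promise slack must be chosen so that every query along the inductive chain lies strictly in the YES side of the promise rather than in the "don't care" gap, while keeping both the query count and the $\QMA$ verifier of LEDMV polynomial. This is precisely where the circuit-to-Hamiltonian mapping approximately preserving near-optimal witnesses is needed, so that the verifier's energy and marginal tolerances can be shrunk to the required inverse-polynomial scale without destroying completeness. The deterministic corollary is then a standard amplification and union-bound argument: make the $\QMA$ oracle's error exponentially small, and the outer randomized procedure concentrates to a deterministic one across the polynomially many queries.
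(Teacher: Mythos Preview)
Your high-level plan---define LEDMV, show it is $\QMA$-complete, then adaptively commit to marginals by searching an $\epsilon$-net---is exactly the paper's strategy, and your algorithm is essentially the paper's derandomized \cref{alg1}. But three specific points diverge from the paper, and two of them are genuine gaps.

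\textbf{The circuit-to-Hamiltonian mapping plays no role here.} You invoke it twice: once to make the LEDMV verifier work, and once as the mechanism that keeps all queries on the {\sc yes} side of the promise. Neither is correct. The paper's containment proof for LEDMV (\cref{lem:LEDMV_in_QMA}) has the prover send \emph{classical} descriptions of all $r$-local marginals of a low-energy state together with a quantum proof of their consistency; the verifier then checks energy and closeness to the target marginals classically, and invokes Liu's $\CLDM$ protocol for consistency. No phase estimation, no SWAP tests, no clock Hamiltonian. The circuit-to-Hamiltonian construction appears only in Section~4, where it is used to reduce an \emph{arbitrary} $\QMA$ verification circuit to a local Hamiltonian in a witness-preserving way---that is the content of \cref{thm:main_informal}, not of the theorem you are proving.

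\textbf{The promise-gap handling is the crux, and your proposed fix is not the paper's.} Your induction assumes that an accepting oracle call certifies a witness with energy $\le a$ and marginals $\epsilon$-close. It does not: on a possibly invalid query, a {\sc yes} answer only certifies \emph{not {\sc no}}, i.e.\ a state with energy $\le a+\delta$ and marginals within the looser parameter $\beta$. The paper does not try to force every query into the {\sc yes} region (and circuit-to-Hamiltonian would not accomplish that anyway). Instead it \emph{absorbs the drift}: the energy threshold at step $l$ is set to $a_l=\hat\lambda_0+l\delta$, so the ``not {\sc no}'' guarantee from step $l$ (energy $\le a_l+\delta=a_{l+1}$) feeds directly into step $l{+}1$, and after all $|I|$ steps the accumulated slack is $|I|\cdot\delta$, which is kept inverse-polynomially small by choosing $\delta$ accordingly. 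Your fixed-$a$ scheme has no such mechanism, so the inductive hypothesis you state is not what the oracle actually delivers.

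\textbf{The derandomization remark is off target.} The $\QMA$ oracle is an oracle; it has no error to amplify, and no union bound is needed over queries. In the paper, the only randomness in \cref{alg1} is the Haar-random guessing of candidate marginals in Step~2a, and the derandomization (\cref{cor:H_alg_dens_deterministic}) simply replaces that with the brute-force search over an $\epsilon$-covering set---which is what you already do. Your algorithm is thus already the deterministic one, and the ``randomized version $\Rightarrow$ deterministic corollary'' arrow runs the other way from what you wrote.
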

Note that the algorithm works for any constant dimension of the density matrices, allowing one to store a classical fingerprint of a low-energy state that can be used to compute expectation values of observables up to this constant locality indefinitely. Density matrices seem to be the only type of classical witness we know of that serves as a correct classical fingerprint of the ground state (for all local observables) without imposing any additional structure on the ground state, such as being close to being samplable~\cite{gharibian2021dequantizing}, classically evaluatable, quantumly preparable~\cite{weggemans2023guidable}, succinct~\cite{jiang2023local}, etc., all of which would place the corresponding local Hamiltonian problem in $\QCMA$.\footnote{Or even smaller classes, depending on the class of states.} It is also straightforward to show that if the Hamiltonian has an inverse polynomially bounded spectral gap, the density matrices can be guaranteed to come from the actual ground state (\cref{cor:1}).

What about other problems in $\QMA$? With some more work, we show that the density matrices corresponding to a near-optimal witness for any problem in $\QMA$ can indeed also be found, as demonstrated in the following theorem.

\begin{theorem}[Informal, from~\cref{thm:main}] For any promise problem in $\QMA$, with input $x$ of size $n$ and verifier circuit $U_n$ using some polynomially-sized quantum proof $\xi$, and any $q$ constant, there exists a polynomial-time classical algorithm that makes queries to a $\QMA$ oracle which outputs:
\begin{itemize}
\item an arbitrarily (inverse-polynomially) good approximation of the maximum acceptance probability of $U_n$ on $(x,\xi)$ over all quantum proofs $\xi$.
\item A set of $q$-local density matrices whose elements are at least arbitrarily (inverse-polynomially) close in trace distance to the density matrices of a quantum proof $\xi$ which has an acceptance probability arbitrarily (inverse-polynomially) close to the maximum acceptance probability.
\end{itemize}
\label{thm:main_informal}
\end{theorem}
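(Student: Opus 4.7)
The plan is to reduce to the local Hamiltonian case so that Theorem~\ref{thm:1_informal} can be invoked. Concretely, I would design a circuit-to-Hamiltonian mapping $U_n \mapsto H$ on $\poly(n)$ qubits with two properties: (a) the ground energy of $H$ is related to the maximum acceptance probability $p^*$ of $U_n$ by a known, computable, and inverse-polynomially invertible function; (b) the $q$-local marginals on the witness register of any sufficiently low-energy state of $H$ are close in trace distance to the corresponding marginals of some quantum proof whose acceptance probability is close to $p^*$. Property (a) is standard: a Kitaev-style construction yields ground energy $\approx c(1-p^*)/\poly(T)$, where $T$ is the circuit depth of $U_n$. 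Property (b) is the novel ingredient referred to in the abstract as a ``circuit-to-Hamiltonian mapping that approximately preserves near-optimal $\QMA$ witnesses''.

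Given such a mapping, Theorem~\ref{thm:1_informal} applied to $H$ yields, via a polynomial-time classical algorithm with a $\QMA$ oracle, a collection of $q$-local density matrices $\{\tilde{\rho}_S\}$ that are inverse-polynomially close in trace distance to the marginals of some low-energy state of $H$. The first bullet point is obtained by evaluating $\sum_i \mathrm{tr}(h_i \tilde{\rho}_{S_i})$ from the approximate marginals, where $H = \sum_i h_i$ is the local decomposition, and then applying the inverse of the function in (a). The second bullet point follows by restricting $S$ to subsets of the witness register and invoking property (b). Throughout, the error parameter passed to Theorem~\ref{thm:1_informal} must be chosen small enough to absorb the polynomial blow-ups incurred both in the Hamiltonian-to-probability conversion and in the propagation from global to local closeness.

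The hard part is constructing a mapping with property (b). The standard Kitaev history state $\frac{1}{\sqrt{T+1}} \sum_{t=0}^{T} |t\rangle_C \otimes U_t \cdots U_1 |\xi\rangle_W |0\rangle_A^{\otimes a}$ does not satisfy it in a trivial way: tracing out $C$ and $A$ leaves a time-averaged mixture in which the witness has been processed by prefixes of the circuit rather than $|\xi\rangle$ itself. A natural remedy is to \emph{pad} the encoded computation with a polynomially long prefix of identity gates, so that for a fraction $1-1/\poly(T)$ of the clock values the witness register is left untouched and its reduced state remains $|\xi\rangle\langle\xi|_W$; this makes the $W$-marginals of the history state inverse-polynomially close to those of $|\xi\rangle$. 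To extend this from exact history states to low-energy states, one invokes a standard robustness analysis for Kitaev-type Hamiltonians: low-energy states are close in trace distance to valid history states, and closeness of global states implies closeness of all marginals. The remaining work is quantitative: carefully tracking the inverse-polynomial parameters through the padding, the energy scaling, and Theorem~\ref{thm:1_informal}, so that all final errors are inverse polynomial as claimed.
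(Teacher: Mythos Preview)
Your proposal is correct and follows essentially the same route as the paper: a circuit-to-Hamiltonian mapping with identity-gate padding (pre-idling) so that low-energy states have witness-register marginals close to those of a near-optimal proof, followed by an application of Theorem~\ref{thm:1_informal}. The paper's only refinements are to use the small-penalty clock construction of~\cite{Deshpande2020} for tighter control of the energy--probability relation, and to read off $p^*$ from the binary-search estimate of $\lambda_0$ already computed inside the local-Hamiltonian algorithm rather than recomputing the energy from the output marginals.
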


The key idea here is to use an approximately witness-preserving reduction from $\QMA$ verification circuits to the local Hamiltonian problem, as will be explained in~\cref{ss:proof_ideas}.

\paragraph{A new intuition.}
Whilst our results are not necessarily surprising, they have the merit of formalizing another intuition as to why quantum witnesses might not have search-to-decision reductions. That is, even though our approach to some extent circumvents the two issues (i) and (ii) from~\cref{sec:intro}, we have that now a single new issue that prevents search-to-decision for quantum witnesses\footnote{This was inspired by the introduction of~\cite{arad2024quasi}, where the ``bottom-up'' property is coined and discussed in some more detail.}:
\begin{itemize}
    \item Quantum states do not possess the \emph{``bottom-up'' property}; that is, given as an input (approximate) descriptions of all constant-locality density matrices that are (approximately) consistent with a global state, there does not appear to be any efficient procedure that allows you to construct the corresponding global state as a quantum state.
\end{itemize}
Classically, it is trivial to construct the global assignment if you are given a collection of consistent local assignments.\footnote{If the classical local assignments are approximate in the sense that each entry has its bit flipped with small probability, then for any probability upper bounded by a constant strictly smaller than $1/2$ you can make the locality of the marginals a large enough of constant so that comparing overlapping assignments with a majority vote leads to the correct global assignment with high probability (assuming you get all local assignments of a fixed locality, similar to our~\cref{thm:main_informal}).} 

Finally, we remark that the above issue is closely related to the $\QCMA$ versus $\QMA$ question. That is because, if such a procedure exists, it would directly imply that $\QCMA = \QMA$. In the {\sc yes}-case, the prover could provide descriptions of consistent density matrices, which the verifier uses to prepare the global state as a quantum witness. In the {\sc no}-case, it does not matter whether the procedure aborts on inconsistent density matrices or generates an arbitrary state, as both cases can be distinguished from the {\sc yes}-case. Since $\QCMA$ has search-to-decision reductions, this would also directly imply search-to-decision for $\QMA$.

\subsection{Proof ideas}
\label{ss:proof_ideas}
\paragraph{Finding low-energy marginals of local Hamiltonians.} We start by introducing a new $\QMA$-complete promise problem to be used by the $\QMA$ oracle, called the \textit{Low-energy Density Matrix Verification} ($\LEDMV$) problem. This problem can be viewed as a combination of the local Hamiltonian problem and the Consistency of Local Density Matrices ($\CLDM$) problem. One is given a $k$-local Hamiltonian $H$, a set of $q$-local density matrices $D = \{\rho_j\}$, and parameters $a$, $\delta$, $\alpha$, and $\beta$. The task is to decide whether there exists a state with energy $\leq a$ whose density matrices all have trace distance at most $\alpha$ from the corresponding density matrices in $D$, or if, for all states with energy less than $a + \delta$, there exists at least one density matrix in $D$ that has trace distance $\geq \beta$ from the corresponding density matrix of that state, promised that one of these conditions holds. This problem is trivially $\QMA$-hard since it reduces to the $\CLDM$ problem when $H = \mathbb{I}$, $a = 1$, and $\delta \geq 0$. Containment in $\QMA$ (\cref{lem:LEDMV_in_QMA}) can be shown by considering a protocol where the prover sends the classical descriptions of all reduced density matrices of some fixed locality of a certain low-energy state accompanied with a quantum proof. The verifier then checks whether these density matrices (i) are close to the ones in $D$, (ii) have low energy with respect to $H$, and (iii) are approximately consistent with a global state, using the quantum proof and the protocol for consistency of local density matrices~\cite{liu2007consistency}.

A probabilistic algorithm that constructs low-energy marginals for the local Hamiltonian can then be given as follows:
\begin{enumerate}
\item One finds a good estimate of the ground state energy using binary search (see also~\cite{ambainis2014physical,gharibian2019complexity}).
\item Next, one constructs the partial assignments of the density matrices by randomly guessing a partial assignment, using the previously obtained density matrices as an input, until a suitable one is found. For this, queries are made to a $\QMA$ oracle to solve instances of $\LEDMV$.
\end{enumerate}
This randomized algorithm can then be derandomized by replacing the random guessing with a brute-force search over an $\epsilon$-net of density matrices (see~\cref{subsec:derandomizaton}). Since we can make calls to the $\QMA$ oracle that are outside the promise set, one has to be careful about what happens when invalid queries are made. Crucially, by exploiting the fact that a {\sc yes} answer to an oracle call means that one can be sure it is \emph{not} a {\sc no} instance (even when an invalid query was made, see~\cite{gharibian2019complexity}), one can show that all iterations in step 2 increase the energy of the possible state of the density matrix as well as the error at every step, but this error can be made arbitrarily inverse polynomially small. Since the number of steps is only polynomial, the total error---both in terms of trace distance and energy---can be made inverse polynomially small as well.

\paragraph{Arbitrary problems in $\QMA$.} To obtain~\cref{thm:main_informal}, the key idea is to use an \emph{approximately witness-preserving reduction} from the $\QMA$-verification problem to a local Hamiltonian. To obtain precise bounds on the energy of the ground state and the maximal acceptance probability of the $\QMA$ verification circuit, we use the small-penalty clock construction of~\cite{Deshpande2020}. We prove that by fine-tuning the small-penalty parameter and using pre-idling on the circuit, any state with energy below a certain threshold must have overlap inverse polynomially close to one with a witness that has an acceptance probability inverse polynomially close to the maximum acceptance probability, tensored with a known state. The small penalty parameter in the clock construction gives the construction very precise control of the guarantees on the overlap and acceptance probability. This allows us to adopt the above algorithm for the corresponding local Hamiltonian problem to obtain approximations of the reduced density matrices at any constant locality of near-optimal witnesses for all problems in $\QMA$.

\subsection{Related work}
\paragraph{Queries to $\QMA$ oracles} In~\cite{ambainis2014physical}, Ambainis initiated the study of $\Pclass^{\QMA[\log]}$, where he showed that the problem $\APXSIM$ -- which formalizes the problem of computing expectation values of local observables on the ground state -- is complete for this class. This work was extended by Gharibian and Yirka~\cite{gharibian2019complexity}, who gave a similar $\Pclass^{\QMA[\log]}$-completeness result for estimating two-point correlation functions, as well as fixing a bug in the hardness proof of Ambainis' original work. In addition, Gharibian and Yirka showed that $\Pclass^{\QMA[\log]} \subseteq \PP$. In~\cite{gharibian2020oracle}, these types of ground state observable problems were studied for Hamiltonians under more physically motivated constraints.

A key difference between the setting in this work and the $\APXSIM$ problem is that, even though computing the density matrices can be viewed as computing the expectation values of many Pauli observables (viewing the density matrix in its Pauli decomposition), one needs to compute \emph{all} density matrices in a way such that they are consistent with a single global state all at once, which is not possible in their setting.

\paragraph{Search-to-decision in a quantum setting} Next to the work mentioned in the introduction by~\cite{irani2022quantum}, Gharibian and Kamminga study search-to-decision reductions for \emph{classical} problems using \emph{quantum} algorithms in~\cite{gharibian2024bqp}. Specifically, they examine this in the context of problems in $\NP$ where a quantum algorithm has access to an $\NP$ oracle. They show that $\FNP \subseteq \FBQP^{\NP[\log]}$, meaning that any witness to an $\NP$-relation can be found using a quantum algorithm that makes $\mO(\log n)$ $\NP$ queries.

As pointed out by Sevag Gharibian (private communication), a result similar to our~\cref{thm:1_informal} can be derived as a corollary of the proof that consistency is $\QMA$-hard under Turing reductions in~\cite{liu2007consistency}. This proof relies on techniques from convex optimization while treating consistency as a black-box constraint, and also identifies the density matrices corresponding to a low-energy state of a Hamiltonian. Roughly, the idea is that the local Hamiltonian problem can
be expressed as a convex program over consistent density matrices (which form a convex set), where the consistency constraint can be (approximately) evaluated using the $\QMA$ oracle. If there exists a low-energy state with energy below a certain input threshold, with high probability the convex optimization algorithm will find some description of the density matrices even if the oracle is ``imperfect''. The considered convex optimization algorithm in~\cite{liu2007consistency} outputs a candidate set of density matrices at each iteration, and cuts out a part of the search space depending on what was observed during the step. We argue that our construction is simpler and more directly aligned with the idea of adaptively constructing partial assignments.\footnote{It is also possible to find the descriptions of the density matrices bit-by-bit using a variation to our method, see the discussion on~\hyperref[final_remark]{page~\pageref{final_remark}}.}

\subsection{Open problems}
Of course, it remains open whether $\QMA$ has search-to-decision reductions that produce the actual quantum states corresponding to accepting witnesses. Since Kitaev's circuit-to-Hamiltonian mapping does not relativize, approaching this question in the local Hamiltonian setting is sensible, as it would directly bypass the quantum oracle separation found in~\cite{irani2022quantum}. In this direction, one could also explore whether imposing restrictions on the types of local Hamiltonians considered -- such as requiring them to be spectrally gapped, geometrically constrained, etc.~-- could simplify the problem, even if these Hamiltonians are not necessarily known to be $\QMA$-hard under these constraints.

Regarding our construction for finding the density matrices of $\QMA$ witnesses, an interesting open question is whether a circuit-to-Hamiltonian construction is necessary or if a direct approach using the trivial $\QMA$-complete problem of circuit verification could be sufficient. It is not clear if this would work, as it seems impossible to compute the acceptance probability of a verification circuit (which is a global observable) directly given only the density matrices of a quantum-proof as an input. This contrasts with the energy of local Hamiltonians, which can be decomposed into a sum of local observables.

\section{Preliminaries}

\paragraph{Notation}
For a Hamiltonian $H$, we say $\ket{\psi}$ is a ground state of $H$ if $\bra{\psi} H \ket{\psi} = \lambda_0$, where $\lambda_0 = \min_{\ket{\psi}} \bra{\psi} H \ket{\psi}$ is the ground state energy (i.e., the smallest eigenvalue) of $H$. The spectral gap of a Hamiltonian $H$ is defined as the difference between the two smallest eigenvalues (which can be zero if the ground space is degenerate). We denote $\mathbb{U}(d)$ as the unitary group of degree $d$, and $\mathbb{SU}(d)$ as the special unitary group (a normal subgroup of the unitary group where all matrices have determinant $1$). For a Hilbert space $\mathcal{H}$, let $\mathcal{D}(\mathcal{H})$ represent the set of all density matrices. We use $\norm{\cdot}_1$ to denote the trace norm. For a number $n \in \mathbb{N}$, write $[n] = \{1,2,\dots,n\}$ and let $[n]^k$ represent the set of all possible $k$-element subsets of $[n]$. For a subset $A \subseteq [n]$, we write $\overline{A}$ for the complementary subset, i.e., $\overline{A} = [n] \setminus A$.

\paragraph{Complexity theory}
We assume basic familiarity with complexity classes; for precise definitions, see the Complexity Zoo.\footnote{\url{https://complexityzoo.net/Complexity_Zoo}.} In this work, all quantum classes will be considered to be promise classes. For example, when we write $\QMA$, we implicitly mean $\mathsf{Promise}\QMA$. For a promise class $\mathcal{C}$, we denote $V^{\mathcal{C}}$ to indicate that a polynomial-time algorithm $V$ has access to an oracle for any problem $A = (A_\text{{\sc yes}},A_\text{{\sc no}}, A_\text{\sc inv})$ in $\mathcal{C}$. If $V$ makes invalid queries (i.e., $x \in A_\text{\sc inv}$), the oracle may respond arbitrarily with a {\sc yes} or {\sc no} answer~\cite{goldreich2006promise,gharibian2019complexity}.

\paragraph{Consistency of density matrices}
\noindent We will consider variants of the one-sided error consistency of local density matrices problem, first defined in~\cite{liu2007consistency}. 
\begin{definition}[Consistency of local density matrices ($\CLDM$)~\cite{liu2007consistency}] We are given a collection of local density matrices $\rho_1,\rho_2,\dots,\rho_m$, where each $\rho_i$ is a density matrix over qubits $C_i \subset [n]$, and $|C_i| \leq k$ for some constant $k$. Each matrix entry is specified by $\poly(n)$ bits of precision. In addition, we are given a real number $\gamma$ specified with $\poly(n)$ bits of precision. The problem is to distinguish between the following cases:
\begin{enumerate}
    \item There exists an $n$-qubit state $\sigma$ such that for all $i \in [m]$ we have $\norm{\tr_{\overline{C}_i}[\sigma] - \rho_i}_1 =0$.
    \item For all $n$-qubit states $\sigma$ there exists some $i \in [m]$ such that  $\norm{\tr_{\overline{C}_i}[\sigma] - \rho_i}_1 \geq \gamma$.
\end{enumerate} 
\label{def:CLDM}
\end{definition}
\begin{lemma}[Adapted from~\cite{liu2007consistency}]$\CLDM$ is in $\QMA$ for $\gamma = \Omega(1/\poly(n))$. 
\label{lem:CLDM_in_QMA}
\end{lemma}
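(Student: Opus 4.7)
My plan is to construct a QMA verifier that treats the witness as multiple copies of the alleged consistent state and performs local tomography. The prover sends a witness $\Psi$ on $nT$ qubits for some $T = \poly(n, 1/\gamma, 2^k)$ chosen large enough; in the yes case the honest witness is $\sigma^{\otimes T}$, where $\sigma$ is an $n$-qubit state with $\tr_{\overline{C}_i}[\sigma] = \rho_i$ for all $i$. The verifier first projects $\Psi$ onto the symmetric subspace over the $T$ copies of the $n$-qubit register. It then picks $i \in [m]$ uniformly at random, traces out the qubits outside $C_i$ in every copy to obtain $T$ (possibly correlated) $k$-qubit states, and performs quantum state tomography on these---say by random Pauli measurements, which is efficient since $k$ is constant---producing an estimate $\hat{\sigma}_i$ of trace-distance accuracy $\gamma/4$. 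Because $\rho_i$ is specified with $\poly(n)$ bits of precision on a constant-dimensional space, the verifier can classically prepare it and compute $\|\hat{\sigma}_i - \rho_i\|_1$; it accepts iff this quantity is at most $\gamma/2$.

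For completeness, the honest witness $\sigma^{\otimes T}$ already lies in the symmetric subspace and consists of genuine IID copies of $\sigma_i = \rho_i$, so tomography returns an estimate within $\gamma/4$ of $\rho_i$ with high probability and the verifier accepts. For soundness, I would invoke the post-selection technique of Christandl, K\"{o}nig, and Renner (or equivalently a quantum de Finetti argument): any symmetric state on $(\mathbb{C}^{2^n})^{\otimes T}$ is, up to a $\poly(T, 2^n)$ multiplicative factor in any measurement statistic, dominated by a convex combination of IID states $\tau^{\otimes T}$ for $n$-qubit states $\tau$. By the no-case promise applied to each such $\tau$, there exists $i^*(\tau) \in [m]$ with $\|\tr_{\overline{C}_{i^*}}[\tau] - \rho_{i^*}\|_1 \geq \gamma$; the verifier picks this index with probability at least $1/m$, and on this branch the tomography estimate is within $\gamma/4$ of $\tau_{i^*}$, which by the triangle inequality forces rejection. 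Choosing $T$ large enough so that the tomography failure probability is $\exp(-T^{\Omega(1)})$---which in turn dominates the $\poly(T,2^n)$ post-selection factor---yields an $\Omega(1/m)$ gap between yes and no acceptance probabilities, which is inverse-polynomial and hence amplifies to the standard $2/3$ vs.\ $1/3$ by in-place QMA amplification.

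The main obstacle is handling the dishonest prover, who may submit a heavily entangled state on $nT$ qubits rather than true IID copies of any single $n$-qubit state; in that case tomography on adversarially correlated marginals need not estimate the marginal of any fixed global state. The symmetric-subspace projection together with the de Finetti reduction is what converts the analysis to the IID case while preserving polynomial overhead. All remaining parameters---the tomography sample complexity on $k = O(1)$ qubits, the de Finetti approximation error, and the $1/m$ loss from picking a bad index---are controllable polynomially in $n$ and $1/\gamma$, so the overall verifier runs in polynomial time and places $\CLDM$ in $\QMA$ for $\gamma = \Omega(1/\poly(n))$.
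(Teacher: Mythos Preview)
Your overall strategy---have the prover send many ``copies,'' project onto the symmetric subspace, do tomography, and handle cheating via de Finetti---is natural, but the soundness step has a genuine gap. The Christandl--K\"onig--Renner post-selection bound for a permutation-invariant state on $(\mathbb{C}^{d})^{\otimes T}$ gives a multiplicative factor of $(T+1)^{d^2-1}$, and here $d = 2^n$, so the factor is $(T+1)^{4^n-1}$. This is \emph{not} $\poly(T,2^n)$: the degree of the polynomial in $T$ is itself exponential in $n$. To make $\exp(-cT)$ beat this factor you need $T \gtrsim 4^n\log T$, i.e.\ an exponential-size witness, which is outside $\QMA$. The trace-distance finite de Finetti theorem has the same problem, since its error scales like $d^2 T/s$ with $d=2^n$.

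The paper (following Liu) avoids this entirely: the verifier receives a \emph{single} $n$-qubit proof $\sigma$, picks a random $i\in[m]$ and a random Pauli $P$ supported on $C_i$, measures $P$, and compares the outcome to $\tr[P\rho_i]$. This yields only an inverse-polynomial completeness--soundness gap, and containment in $\QMA$ then follows from the Aharonov--Regev result $\QMA^{+}=\QMA$, which amplifies such ``expectation-value'' tests without any de Finetti reduction. If you want to salvage a multi-copy tomography approach, you would have to apply de Finetti \emph{locally} on the $k$-qubit registers $(C_i)^{\otimes T}$ after choosing $i$ (where the dimension $2^k$ is constant, so the factor is genuinely $\poly(T)$), and then argue via the single-copy $n$-qubit marginal $\sigma$ that for the bad index $i^\ast$ the resulting mixture over $k$-qubit states must place $\Omega(\gamma)$ weight on states $\gamma/2$-far from $\rho_{i^\ast}$. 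That can be made to work, but it is a different argument from the global post-selection you invoked.
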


Liu shows containment in $\QMA$ by giving a protocol in which the verifier performs a random Pauli measurement on a random subset $C_i$ qubits of the proof $\sigma$, which is then compared with what the expected outcome would be if the density matrix was equal to $\rho_i$. This only has a very small success probability, and using the relation $\QMA+=\QMA$ from~\cite{aharonov2003lattice} Liu shows that that success probability can be amplified using a form of parallel repetition without having to worry about entanglement across ``supposed copies'' of the proof. The two-sided error (so when there is an error parameter in case (i) in~\cref{def:CLDM}) is also known to be in $\QMA$ by a simple extension of the proof of~\cite{liu2007consistency}, see~\cite{buhrman2024quantum}.\footnote{This containment does come with some restrictions on how the completeness and soundness parameters can be related, which also depends on the locality $k$.} For hardness, \cite{liu2007consistency} also showed that $\CLDM$ is $\QMA$-hard under Turning reductions for $\gamma=  1/\poly(n)$. Later,~\cite{broadbent2022qma} proved that the two-sided error (so when there is also an error in case 1 in~\cref{def:CLDM}) is $\QMA$-hard with respect to Karp reductions for an inverse polynomial promise gap.  However, the one-sided error version of $\CLDM$ suffices for our purposes, and simplifies the analysis as we only need to specify a single parameter $\gamma$.

\section{Finding low-energy marginals of local Hamiltonians}
\label{sec:marginals_LH}
\subsection{A simple randomized algorithm}
Let us begin by defining a new promise problem called the \emph{Low-energy Density Matrix Verification} problem, which will serve as the $\QMA$-complete problem to be used by the oracle.

\begin{definition}[Low-energy Density Matrix Verification]$(\LEDMV(k,q,\delta,\alpha,\beta))$ Let $H = \sum_{i \in [m]} H_i$ be a $k$-local Hamiltonian on $n \in \mathbb{N}$ qubits of $m = \poly(n)$ terms $H_i$ which satisfy $0 \preceq H_i \preceq 1$, for all $i \in [m]$.  One is given efficient classical descriptions of parameters $a,\delta \geq 0$ as well as a description of a collection of $q$-local density matrices $D = \{\rho_j\}_{j\in [l]}$ with $l = \poly(n)$. For each $\rho_j$, let $\{ C_j \}$ be the set of sets of index labels of the qubits of $\rho_j$ and denote $\overline{C}_j = [n] \setminus C_j$ for the complementary subset. The task is to decide which of the following two cases hold, promised that either one is the case:
 \begin{enumerate}[label=(\roman*)]
        \item There exists an $n$-qubit state $\xi$ with $\tr[H \xi] \leq a$ such that 
$            \norm{ \tr_{\overline{C}_j} [\xi]-\rho_j }_1 \leq \alpha
       $ for all $j \in [l]$; 
         \item For all $n$-qubit states $\xi$ with $\tr[H \xi] \leq a + \delta$ we have that there exists an $j \in [l]$ such that
        $
             \norm{ \tr_{\overline{C}_j} [\xi]-\rho_j }_1 \geq \beta
        $.
    \end{enumerate}
\end{definition}

$\LEDMV$ is trivially $\QMA$-hard because one can choose the Hamiltonian to be the identity operator $\mathbb{I}$, set $a=m$, and let any $\delta \geq 0$, thereby reducing it to the $\QMA$-hard $\mathsf{CLDM}$ problem as defined in~\cref{def:CLDM}. To demonstrate containment, we will show that $\LEDMV$ is in $\QMA$ for a wide range of parameters. The $\QMA$ protocol is given in~\cref{prot1}.\\

\
\begin{protocol}[$\QMA$ protocol for $\LEDMV$.]
\noindent \textbf{Input:} $H$, $D$, $a$, $\delta$, $\alpha$, $\beta$.\\

\noindent \textbf{Set:} $\gamma := \min\{\frac{\delta}{m},\beta-\alpha\}$, $r := \max\{k,q\}$, $I := [n]^r$.\\

\noindent \textbf{Protocol:}
\begin{enumerate}
    \item The prover sends a classical description of the set $\Sigma := \{ \sigma_{i_1,\dots,i_r} \}_{(i_1,\dots,i_r) \in I}$ and a quantum proof $\xi_{\textup{proof}}$.
    \item Let $\{C^H_i\}$ be set of indices of qubits that terms $H_i$ acts on. The verifier performs the following four checks, and accepts if and only if all of them accept:
    \begin{itemize}
        \item \textbf{Check 1:} it checks if all $\sigma_{i_1,\dots,i_r}$ are valid density matrices.
        \item \textbf{Check 2:} it checks if $\sum_{i \in [m]} \max \tr[H_i \tr_{\overline{C}^H_i}[\sigma_{i_1,\dots,i_r}]] \leq a$, where the maximization is over all $\sigma_{i_1,\dots,i_r}$ that contain all indices in $C^H_i$.
        \item \textbf{Check 3:} it checks if $ \max \norm{\rho_j - \tr_{\overline{C}_j}[ \sigma_{i_1,\dots,i_r}]}_1 \leq \alpha$ for all $j \in [l]$, where the maximization is over all $\rho_{i_1,\dots,i_r}$ that contain all indices from $C_j$.
        \item \textbf{Check 4:} it uses the quantum proof $\xi_{\textup{proof}}$ to verify $\CLDM(\Sigma,\gamma)$, using the standard protocol as described in~\cite{liu2007consistency}.
    \end{itemize}
\end{enumerate}
  \label{prot1}
\end{protocol}

\

Let us first explain some notation and ideas behind~\cref{prot1}. Both the input Hamiltonian $H = \sum_{i \in [m]} H_i$ and the input set of density matrices $D = \{\rho_j\}_{j \in [l]}$ live in an overall Hilbert space comprising of $n$ qubits. There are two notions of locality, referring to the maximum number of qubits each term $H_i$ acts non-trivially on, denoted by $k$, and the maximum size of any set $C_j$, denoted by $q$, which contain all qubit indices on which a density matrix $\rho_j$ from the set $D$ is defined. To also be able to refer to the indices of the qubits a local term $H_i$ acts on, we define the set $\{C_i^H\}$ to play the same role for $H$ as ${C_j}$ does for $\rho_j$. We take $r = \max \{k,q\}$, such that when you take all $r$-local density matrices $\sigma_{i_1,\dots,i_r}$ of an $n$-qubit state $\xi$, which have indices from the set $I:=[n]^r$, you have all necessary information to evaluate the energy of $H$ and to compare the individual trace distances with density matrices from $D$.  However, there might be cases where different $\sigma_{i_1,\dots,i_r}$ both contain the same indices needed to evaluate some trace distance or energy, which might yield different values if the prover did not provide density matrices that are consistent. To work around this, we simply compute all of them, and take the maximum as our value to be used (see Checks 2 and 3). Note that this would not do anything if the prover is honest and provides a consistent collection of density matrices. Importantly, Check 4 is \emph{not} used to check if the density matrices from $D$ are consistent, but whether the density matrices provided by the prover are; if Check 4 succeeds, then Check 2 passing already gives you this information.

Let us now prove that~\cref{prot1} is sound. 
\begin{lemma} We have that $\LEDMV(k,q,\delta,\alpha,\beta)$ is in $\QMA$ for $k,q \in \mathbb{N}$ constant, $\beta-\alpha = \Omega(1/\poly(n))$ and $\delta = \Omega(1/\poly(n))$.  
\label{lem:LEDMV_in_QMA}
\end{lemma}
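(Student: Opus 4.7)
The plan is to verify that~\cref{prot1} satisfies completeness and soundness with an inverse-polynomial gap; standard $\QMA$ amplification then upgrades this to the usual constant-gap definition. Throughout, note that $\gamma := \min\{\delta/m,\beta-\alpha\} = \Omega(1/\poly(n))$ by hypothesis, so Check 4 is an instance of $\CLDM(\Sigma,\gamma)$ that lies in $\QMA$ by~\cref{lem:CLDM_in_QMA}, and its verifier is invoked as a subroutine. Checks 1--3 are deterministic classical predicates on the received description $\Sigma$, so all randomness (and hence all completeness/soundness error) originates in Check 4.

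For completeness, the honest prover takes a witness $\xi$ from case (i) and sends $\Sigma = \{\tr_{\overline{\{i_1,\dots,i_r\}}}[\xi]\}_{(i_1,\dots,i_r) \in I}$ together with $\xi$ as the quantum proof for Check 4. Since $r \geq \max\{k,q\}$, every $H_i$ is supported inside some $r$-subset, and every such $\sigma_{i_1,\dots,i_r}$ restricts on $C^H_i$ to the single marginal $\tr_{\overline{C}^H_i}[\xi]$; the maxima in Checks 2 and 3 therefore collapse, Check 2 simplifies to $\tr[H\xi] \leq a$, and Check 3 to the hypothesis $\|\tr_{\overline{C}_j}[\xi] - \rho_j\|_1 \leq \alpha$. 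Check 4 passes with the full completeness of the $\CLDM$ protocol because $\Sigma$ is exactly consistent with $\xi$.

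For soundness, suppose some prover makes all four checks accept with probability strictly above the $\CLDM$ soundness threshold. Checks 1--3 then pass on $\Sigma$ and Check 4 accepts with high probability, so by the soundness half of~\cref{lem:CLDM_in_QMA} there exists an $n$-qubit state $\xi'$ with $\|\tr_{\overline{\{i_1,\dots,i_r\}}}[\xi'] - \sigma_{i_1,\dots,i_r}\|_1 < \gamma$ for every tuple in $I$. Monotonicity of the trace distance under partial trace descends this bound to every $C^H_i$ and every $C_j$. Combining with $\|H_i\|_\infty \leq 1$ and the matrix Hölder inequality $|\tr[H_i(\rho-\sigma)]| \leq \|H_i\|_\infty \|\rho-\sigma\|_1$, Check 2 gives
\[ \tr[H\xi'] = \sum_{i\in[m]}\tr[H_i \tr_{\overline{C}^H_i}[\xi']] \leq \sum_{i\in[m]} \max_{\sigma_{i_1,\dots,i_r}\supseteq C^H_i}\tr[H_i\tr_{\overline{C}^H_i}[\sigma_{i_1,\dots,i_r}]] + m\gamma \leq a+m\gamma \leq a+\delta, \]
while Check 3 together with the triangle inequality gives $\|\tr_{\overline{C}_j}[\xi']-\rho_j\|_1 \leq \alpha+\gamma \leq \beta$ for every $j\in[l]$. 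Such a $\xi'$ directly contradicts case (ii), so no such cheating prover can exist.

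The main obstacle is precisely the soundness step: extracting a \emph{single} global state $\xi'$ whose marginals are \emph{simultaneously} close to all of $\Sigma$, rather than a merely ``one-at-a-time'' closeness that would fail to bound $\tr[H\xi']$ or the trace distances against $\rho_j$ in the same state. This is exactly what the single $\CLDM$ call in Check 4 buys via~\cref{lem:CLDM_in_QMA}, and it is the reason the protocol bundles consistency into one call rather than checking each $\rho_j$ individually against the prover's $\sigma$'s. Once $\xi'$ is in hand, the remaining bounds are routine applications of the triangle and Hölder inequalities engineered by the choice of $\gamma$, and $\QMA$ amplification upgrades the inverse-polynomial gap to a constant one.
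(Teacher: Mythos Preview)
Your proof is correct and follows essentially the same approach as the paper: the honest prover sends the $r$-local marginals of the witness $\xi$ from case (i) for completeness, and soundness proceeds by contradiction---if Checks~1--3 pass and Check~4 accepts above its soundness threshold, the $\CLDM$ guarantee produces a single global $\xi'$ whose energy and marginal distances are then bounded via the triangle inequality and $\|H_i\|_\infty \leq 1$, contradicting case (ii). The only point you gloss over that the paper makes explicit is that the maximisations in Checks~2 and~3 range over at most $\binom{n-k}{r-k}$ (respectively $\binom{n-q}{r-q}$) density matrices, which is polynomial in $n$ for constant $k,q$, so the verifier runs in polynomial time.
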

\begin{proof}[Proof]
We will prove the correctness of~\cref{prot1}. First, we argue that it can be performed in polynomial time, as the maximisation for each entry in the sum of Check $2$ requires a brute force computation over at most 
$\binom{n-k}{r-k}$
density matrices $\sigma_{i_1,\dots,i_q}$, as every subset of $k$ vertices in a complete hypergraph of degree $r$ is contained in that many edges. This binomial coefficient is polynomial in $n$ whenever $q$ and $k$ (and thus also $r$) are constant. A similar argument (with $q$ instead of $k$) can be made for Check 3. It is clear that all other steps must run in polynomial time for our choice of parameters.
\\
\linebreak
\noindent \textbf{Completeness:} This follows directly by providing all $r$-qubit reduced density matrices $\Sigma = \{\sigma_{i_1,\dots,i_r} | \sigma_{i_1,\dots,i_r} = \tr_{[n] \setminus \{i_1,\dots,i_r\}}[\xi ]], i_1,\dots,i_r \in I \} $, where $\xi$ is the state as in the promise of case (i).\footnote{The reader might correctly argue that such an exact description cannot be given using a polynomially-sized description, but an exponentially precise description can always be given which also suffices for our purposes as $\CLDM$ will also be in $\QMA$ when the $0$ in~\cref{def:CLDM} is replaced by something exponentially close to $0$.} Check 1 succeeds with certainty since all $\sigma_{i_1,\dots,i_r}$'s are density matrices; Check 2 and Check 3 also succeed with certainty because of the promise of being in a {\sc yes}-instance and the trace distance can only decrease under the partial trace and Check 4 succeeds w.h.p.~because of the arguments for Checks 1 and 2 and the fact that the prover provides exactly the density matrix descriptions of $\xi$, and $\CLDM(\{\sigma_i\},\gamma)$ is in $\QMA$ by~\cref{lem:CLDM_in_QMA}.\\

\noindent \textbf{Soundness:} We will use a proof by contradiction. Suppose Checks 1 up to and including 3 have already succeeded, which means that $\sum_{i \in [m]} \max \tr[H_i \tr_{\overline{C}^H_i}[\sigma_{i_1,\dots,i_r}]] \leq a$ and  $ \max \norm{\rho_j - \tr_{\overline{C}_j}[ \sigma_{i_1,\dots,i_r}]}_1 \leq \alpha$ for all $j \in [l]$. Now suppose that Check 4 accepts with probability $> 1/3$, then we have that there must exist a $\xi'$ such that for all $i_1,\dots,i_r \in I$ it holds that 
\begin{align*}
    \norm{\sigma_{i_1,\dots,i_r} - \tr_{[n] \setminus \{i_1,\dots,i_r\}} [\xi']}_1 < \gamma.
\end{align*}
However, this implies that 
\begin{align*}
    \tr[H \xi']&= \sum_{i \in [m]} \tr[H_i \tr_{\overline{C}_i^H} [\xi']]\\
    &=   \sum_{i \in [m]} \max \tr[H_i \left( \tr_{\overline{C}_i^H} [\xi']-  \tr_{\overline{C}^H_i}[\sigma_{i_1,\dots,i_r}]\right)] +\sum_{i \in [m]} \max  \tr[H_i \tr_{\overline{C}^H_i}[\sigma_{i_1,\dots,i_r}]]\\
    &\leq \sum_{i \in [m]} \max  \norm{ \left( \tr_{\overline{C}_i^H} [\xi']-  \tr_{\overline{C}^H_i}[\sigma_{i_1,\dots,i_r}]\right)}_1 + \sum_{i \in [m]} \max \tr[H_i \tr_{\overline{C}^H_i}[\sigma_{i_1,\dots,i_r}]]\\
    &< m \gamma + a\\
    &\leq a + \delta,
\end{align*}
for our choice of $\gamma$. Here we used (i) the linearity of the trace, (ii) that trace distance can only decrease under the partial trace and (iii) that the maximisation is performed over all $\sigma_{i_1,\dots,i_r}$ that contain all indices in $C^H_i$. At the same time,  using that Check 2 succeeded, for all $\rho_j \in D $ we must have
\begin{align*}
    \norm{\rho_j - \tr_{\overline{C}_j} [\xi']}_1 &\leq \max \norm{\rho_j -\tr_{\overline{C}_j}[ \sigma_{i_1,\dots,i_r}] }_1 + \max \norm{ \tr_{\overline{C}_j}[ \sigma_{i_1,\dots,i_r}]-  \tr_{\overline{C}_j} [\xi']}_1 \\
    &< \alpha + \gamma \\
    &\leq \beta.
\end{align*}
Hence, this implies that there must exist a state $\xi'$ with energy $< a+\delta$ such that all $\rho_j \in D $ are strictly less than $\beta$-consistent (in terms of trace distance) with $\xi'$, which is inconsistent with the promise in a {\sc no}-instance. Hence, Check 4 must reject with probability $\geq 2/3$, which means the overall procedure rejects with probability $\geq 2/3$.
\end{proof}

Our next step is to demonstrate that it is possible, for $q$ constant, to use a sampling procedure to efficiently find an approximation to any given $q$-qubit density matrix.

\begin{lemma} Let $\rho$ be any $q$-qubit density matrix for some constant $q \in \mathbb{N}$. Then there exists a polynomial-time randomized algorithm which outputs a density matrix $\hat{\rho}$ such that $\norm{\rho - \hat{\rho}}_1 \leq \epsilon$ with probability at least $\epsilon^{2(2^q-1)}$.
\label{lem:sampl_dm}
\end{lemma}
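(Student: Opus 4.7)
I would parameterize any $q$-qubit density matrix by its Pauli decomposition, $\rho = \frac{1}{2^q}(I + \sum_{P \neq I} c_P P)$, with $c_P = \tr[P\rho] \in [-1,1]$ and the sum ranging over the $N := 4^q - 1$ non-identity $q$-qubit Paulis. The randomized algorithm picks a grid resolution $\eta > 0$, independently samples each $\hat{c}_P$ uniformly from an $\eta$-spaced grid of $\lceil 2/\eta \rceil$ points in $[-1,1]$, forms $\hat{\rho} := \frac{1}{2^q}(I + \sum_P \hat{c}_P P)$, and outputs it if $\hat{\rho} \succeq 0$; otherwise outputs a fixed default density matrix (e.g.\ $I/2^q$).

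For the analysis I would condition on the event $E$ that $|\hat{c}_P - c_P| \leq \eta$ for every non-identity Pauli $P$. Independence of the samples gives $\Pr[E] = \Omega(\eta^N)$. On $E$, orthonormality of the Pauli basis yields $\|\rho - \hat{\rho}\|_2^2 = 2^{-q} \sum_{P \neq I}(c_P - \hat{c}_P)^2 \leq N\eta^2 / 2^q$, and combining with $\|A\|_1 \leq \sqrt{2^q}\|A\|_2$ for operators on $2^q$-dimensional Hilbert space gives $\|\rho - \hat{\rho}\|_1 \leq \sqrt{N}\,\eta \leq \epsilon$ once $\eta = \epsilon/\sqrt{N}$. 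For this $\eta$ small enough, continuity of the spectrum under small Frobenius-norm perturbations ensures that $\hat\rho \succeq 0$ automatically on $E$, so the rejection step does not eat into the probability bound.

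The main obstacle is matching the precise exponent $2(2^q - 1)$ in the statement, which is strictly smaller than the $4^q - 1$ that the Pauli-grid argument above produces. I expect the tight bound to come from instead sampling a Haar-random pure state $|\phi\rangle \in \mathbb{C}^{2^q}$, whose projective manifold has real dimension exactly $2(2^q - 1)$, and outputting $\proj{\phi}$. Indeed, for Haar-random $|\phi\rangle$ and any fixed pure $|\psi\rangle$, the squared overlap $|\langle\phi|\psi\rangle|^2$ is $\mathrm{Beta}(1, 2^q - 1)$-distributed, so $\Pr[\,|\langle\phi|\psi\rangle|^2 \geq 1 - t\,] = t^{2^q - 1}$, and via $\|\proj{\phi} - \proj{\psi}\|_1 = 2\sqrt{1 - |\langle\phi|\psi\rangle|^2}$ this gives $\Pr[\|\proj{\phi} - \proj{\psi}\|_1 \leq \epsilon] = \Omega(\epsilon^{2(2^q - 1)})$ directly. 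Extending this to mixed $\rho$ would need an additional idea, most naturally by sampling a purification of $\rho$ in a larger space and invoking contractivity of the partial trace in the trace norm; the delicate point is to control the size of the purification ancilla so that the final exponent does not inflate beyond $2(2^q - 1)$, presumably by exploiting the unitary gauge freedom in the choice of purification.
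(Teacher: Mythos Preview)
Your second approach---Haar-sample a pure state, use the $\mathrm{Beta}(1,d-1)$ law for the squared overlap, then handle mixed $\rho$ via a $2q$-qubit purification, Uhlmann, and contractivity of the partial trace---is exactly what the paper does. The paper samples a Haar-random unitary in $\mathbb{U}(4^q)$, applies it to $\ket{0^{2q}}$, and outputs the partial trace over the last $q$ qubits; Uhlmann is invoked to pass from the pure-state overlap bound to a fidelity bound on the reduced states.

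The ``delicate point'' you flag is genuine and well spotted: comparing the Haar-random $2q$-qubit pure state to a \emph{single fixed} purification $\ket{\xi}$ of $\rho$ yields exponent $2(4^q-1)$, not $2(2^q-1)$. The paper's written argument does not visibly do anything beyond this---it asserts the smaller exponent after invoking Uhlmann, but the Uhlmann bound $F(\rho,\hat\rho)\geq |\langle\xi|\phi\rangle|$ only gives the $4^q$ exponent. To actually land on $2(2^q-1)$ one would need to exploit that $\ket{\phi}$ need only be close to the whole $U(2^q)$-orbit of purifications rather than to a single point, and the paper does not carry out that computation. For the paper's purposes this is harmless: $q$ is constant, so either exponent gives a polynomial-time algorithm, and the exact constant in the exponent is never used downstream.

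Your first (Pauli-grid) route is a genuinely different argument. Two remarks on it: first, as you note, it gives the looser exponent $4^q-1$; second, the claim that ``$\hat\rho\succeq 0$ automatically on $E$'' fails when $\rho$ has a zero eigenvalue (e.g.\ $\rho$ pure), since an arbitrarily small perturbation of the Pauli coefficients can push an eigenvalue negative. This is fixable by projecting onto the PSD cone, which costs only $O(\eta)$ in trace distance, but as stated the rejection step can eat into the good event.
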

\begin{proof}

By~\cite{kus1988universality}, the probability density function of the squared fidelity $ y = |\langle \psi | \phi \rangle|^2 $ between two Haar random pure states $ \ket{\psi} $ and $ \ket{\phi} $ in a Hilbert space of dimension $ d $ is given by
\[
\mathbb{P}[|\langle \psi | \phi \rangle|^2 = y] = (d-1)(1-y)^{d-2}.
\]
Letting $ d = 2^q $ for a $ q $-qubit system, the cumulative distribution function for the squared fidelity being less than or equal to $ 1 - \epsilon^2 $ is
\[
\mathbb{P}[|\langle \psi | \phi \rangle|^2 \leq 1 - \epsilon^2 ] = \int_{0}^{1 - \epsilon^2} (d-1)(1-y)^{d-2} \, dy = 1 - \epsilon^{2(d-1)}.
\]
For pure states $ \ket{\psi} $ and $ \ket{\phi} $, the trace distance bound $ \|\ketbra{\psi} - \ketbra{\phi}\|_1 \leq \epsilon $ holds if and only if $ |\langle \psi | \phi \rangle|^2 \geq 1 - \epsilon^2 $. Therefore,
\[
\mathbb{P}\left[\|\ketbra{\psi} - \ketbra{\phi}\|_1 \leq \epsilon\right] = 1 - \mathbb{P}[|\langle \psi | \phi \rangle|^2 \leq 1 - \epsilon^2 ] = \epsilon^{2(d-1)}.
\]

For a $ q $-qubit density matrix $ \rho $, there exists a purification $ \ket{\xi} $ in a $ 2q $-qubit system. By Uhlmann’s Theorem, the fidelity between two density matrices equals the maximum fidelity between their purifications. Thus, sampling a Haar random pure state $ \ket{\phi} $ and considering the reduced density matrix $ \hat{\rho} $ on the first $ q $ qubits, we have that $
\mathbb{P}\left[\|\rho - \hat{\rho}\|_1 \leq \epsilon\right] \geq \epsilon^{2(2^q - 1)}$.
Sampling a Haar random unitary $ U \in \mathbb{U}(4^q) $ provides a description of $ \ket{\phi} $ and can be performed in polynomial time for constant $ q $ (e.g., \cite{ozols2009}). 
\end{proof}
We can now state the randomized $\QMA$ query algorithm to find all $q$-qubit marginals of a low-energy state of a $k$-local Hamiltonian in~\cref{alg1}.\\

\begin{custalgo}[$\QMA$-query algorithm to find $\epsilon$-approximations of the $q$-local density matrices of a low energy state of some $k$-local Hamiltonian $H$.]
\noindent \textbf{Input:} A classical description of all local terms of a Hamiltonian $H$, locality parameters $k,q$, an accuracy parameter $\epsilon$.\\ 

\noindent\textbf{Set:} $r := \max\{k,q\}$, $I := [n]^r$, $\alpha: = \epsilon/2$, $\beta:=\epsilon$, $T:=3 |I| \left( \frac{2}{\epsilon}\right)^{2(2^q-1)}$, $\delta := \frac{a}{|I|+1}$.\\

\noindent\textbf{Algorithm:} 
\begin{enumerate}
    \item Run a binary search on the local Hamiltonian problem corresponding to $H$ using the $\QMA$ oracle to find an estimate of $\hat{\lambda}_0$ such that $\lambda_0(H) \in [\hat{\lambda}_0-\delta, \hat{\lambda}_0 + \delta]$. Set $\{ a_l | a_l = \hat{\lambda}_0 + l \delta \}_{l \in [|I|]}$.
    \item Do the following at most $T$ times, starting with $l \leftarrow 1$: \\
    \phantom{        }Assume we are at step $l$ and have obtained $\{\rho_j\}_{j \in [l-1]}$.
    \begin{enumerate}
        \item \textbf{Partial assignment guess:} Guess a $q$-qubit density matrix $\rho_l$ in the following way: pick a Haar random unitary $U \in \mathbb{U}(4^q)$, create the corresponding Haar random pure state $\ket{\xi}$ by applying $U$ to the all-zeros state $\ket{0^{2q}}$ and trace out the last $q$ qubits to end up with a  $q$-qubit system described by a known density matrix $\rho$.
        \item \textbf{Partial assignment verification:} Make a single query to the $\QMA$ oracle with the instance $\LEDMV(k,q,\delta,\alpha,\beta)$ with $H$, $\{\rho_j\}_{j \in [l]}$ and $a_l$ as inputs. If the outcome is {\sc yes}, continue and set $l \leftarrow l+1$, $\rho_l = \rho$ and add $\rho_l$ to create the set $\{\rho_j\}_{j \in [l]}$. If the output is {\sc no}, return to step (a). 
    \end{enumerate}
    \item Output $\{\rho_j\}_{j \in |D|}$ (and optionally $\hat{\lambda}_0(H)$).
\end{enumerate}
  \label{alg1}
\end{custalgo}

\

\label{discussion_promise}
The key idea behind \cref{alg1} is that even density matrices \emph{within} the promise gap maintain sufficient precision for our desired approximation. This effectively creates a decision problem where the soundness parameter serves as an upper bound on precision. This concept stems from the nature of making oracle queries to promise problems: when you encounter a {\sc yes} instance, all you can be certain of is that it is \emph{not} a {\sc no} instance. However, it is crucial to demonstrate that enough samples are collected to ensure that, with high probability, only {\sc yes} instances could have been observed. Since density matrices are constructed through partial assignments, each step introduces a potential error. Therefore, one has to be careful to ensure that these errors remain small enough so that the state, which the density matrices approximately represent, does not significantly increase in energy.

\begin{theorem} Let $H = \sum_{i \in [m]} H_i$ be a $k$-local Hamiltonian on $n$ qubits of $m = \poly(n)$ terms $H_i$, $0 \preceq H_i \preceq 1$. Let $q \in \mathbb{N}$ some constant and $a \in [1/\poly(n),m]$ and $\beta = \Omega(1/\poly(n))$ be input parameters. Let $I = [n]^q$ and write $C_j$ for the $j$th element in $I$. Then there exists a randomized polynomial-time algorithm making queries to a $\QMA$ oracle which with probability $\geq 2/3$ outputs a set of $q$-local density matrices $\{\rho_j\}_{j \in I}$ for which there exists a $\xi$ which satisfies $\tr[H\xi] \leq \lambda_0 + a$,  such that for all $j \in [|I|]$ we have $
            \norm{\rho_j - \tr_{\overline{C}_j} [\xi]}_1 \leq \epsilon
$.
\label{thm:H_alg_dens}
\end{theorem}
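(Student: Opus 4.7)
I would verify the correctness and polynomial runtime of~\cref{alg1} in three stages: the binary search, an inductive invariant for the iterative construction, and a probabilistic budget argument.

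Step 1 is standard: since $\LH$ with an inverse-polynomial promise gap is $\QMA$-complete, $O(\log(m/\delta)) = \poly(n)$ queries to the $\QMA$ oracle over $[0,m]$ locate $\hat{\lambda}_0$ with $|\hat{\lambda}_0 - \lambda_0(H)| \leq \delta$, calibrating the thresholds $a_l = \hat{\lambda}_0 + l\delta$. For step 2, I would establish the following invariant by induction on $l$: after an iteration terminates with YES and produces $\{\rho_j\}_{j \in [l]}$, there exists $\xi_l$ with $\tr[H\xi_l] \leq a_l + \delta$ and $\norm{\rho_j - \tr_{\overline{C}_j}[\xi_l]}_1 < \beta$ for all $j \in [l]$. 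This is exactly the negation of case (ii) of the $\LEDMV$ promise, and crucially it follows whenever the oracle answers YES, even if the query sits in the promise gap: by the convention of~\cite{goldreich2006promise,gharibian2019complexity}, a YES response certifies only that the instance is not a NO instance, which is precisely the invariant. At $l = |I|$ the invariant yields a state $\xi$ of energy at most $\hat{\lambda}_0 + (|I|+1)\delta$, which with the choice $\delta = a/(|I|+1)$ and $\hat{\lambda}_0 \leq \lambda_0 + \delta$ gives $\tr[H\xi] \leq \lambda_0 + a$ (up to the $\delta$-slack from binary search, absorbable by a minor rescaling of $a$), while $\norm{\rho_j - \tr_{\overline{C}_j}[\xi]}_1 < \beta = \epsilon$ for every $j$ furnishes the output guarantee.

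For the runtime and the $\geq 2/3$ success probability, I would fix a reference state $\xi^*$ (for instance a ground state of $H$) and analyse the event that \emph{every} random guess produced in step 2(a) is $\alpha$-close in trace distance to $\tr_{\overline{C}_l}[\xi^*]$. By~\cref{lem:sampl_dm}, such a ``good'' sample arises with probability at least $\alpha^{2(2^q-1)} = (\epsilon/2)^{2(2^q-1)}$ per attempt, and when the accepted $\rho_1,\ldots,\rho_{l-1}$ are all good, $\xi^*$ itself witnesses case (i) of the $\LEDMV$ instance at step $l$, so a fresh good sample makes the query a genuine YES instance that the oracle must accept. The total number of independent attempts required to accumulate $|I|$ consecutive good samples is then a sum of $|I|$ i.i.d.\ geometric random variables with success probability $(\epsilon/2)^{2(2^q-1)}$, so Markov's inequality applied to the budget $T = 3|I|(2/\epsilon)^{2(2^q-1)}$ shows that an all-good run completes within $T$ attempts with probability at least $2/3$. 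Each attempt runs in polynomial time because a Haar-random unitary in $\mathbb{U}(4^q)$ can be sampled efficiently for constant $q$ (e.g.\ via the construction of~\cite{ozols2009}) and only a single $\QMA$ oracle call is made per attempt.

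The main obstacle is the mismatch between the $\beta$-closeness obtained from the invariant and the $\alpha$-closeness required for the ``good-sample-forces-YES'' argument: were the oracle to accept a bad sample on some in-the-gap query, the fixed reference $\xi^*$ would no longer be $\alpha$-consistent with the accepted $\rho_j$'s, and the per-attempt YES probability could drop. The cleanest way I see around this is to treat correctness and runtime as logically separable, since the invariant from step 2 already holds unconditionally once the algorithm terminates (regardless of whether bad samples were accepted along the way), and the runtime bound only needs to lower-bound the probability that the algorithm terminates within $T$ attempts. It suffices to compare the algorithm against the hypothetical all-good execution, which terminates within $T$ attempts with probability $\geq 2/3$ by the Markov argument and in which the oracle is forced to accept every good sample; under this coupling the real algorithm can only finish sooner, proving the runtime bound and completing the proof.
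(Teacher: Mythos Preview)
Your proof follows exactly the paper's approach: correctness from the invariant that a YES answer rules out the NO case (hence some $\xi_l$ of energy $\le a_l+\delta$ with all accepted marginals $<\beta$-close, and at $l=|I|$ this gives $\tr[H\xi]\le\lambda_0+a$), and runtime from the per-attempt success bound of \cref{lem:sampl_dm} combined with linearity of expectation and Markov's inequality to get the budget $T=3|I|(2/\epsilon)^{2(2^q-1)}$.

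You actually go further than the paper by explicitly flagging the $\alpha$-versus-$\beta$ tension; the paper's complexity paragraph simply asserts that sampling $\epsilon/2$-close to ``$\tr_{\overline{C}_j}[\xi]$'' bounds the expected attempts per iteration, without ever specifying which $\xi$ this is or why it continues to witness a YES instance after earlier in-the-gap accepts. Your coupling patch (``the real algorithm can only finish sooner'') is not quite airtight as written: once the adversarial oracle accepts an in-the-gap sample, the two coupled executions sit at different values of $l$, and a sample that is good for marginal $l_{\text{hyp}}$ need not be good for marginal $l_{\text{real}}$, so the domination claim does not follow by the obvious induction. That said, this is a subtlety the paper's own proof glosses over entirely rather than a gap you introduce, so your treatment is at least as rigorous as the original.
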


\begin{proof} We will prove correctness and analyse the complexity of~\cref{alg1}.

\paragraph{Correctness:}
See~\cite{ambainis2014physical,gharibian2019complexity} for the correctness of Step 1. Since $\LEDMV$ is $\QMA$-complete, there is a polynomial-time Karp reduction from $\LH$ to $\LEDMV$, which can then be used to perform Step 1 as described in~\cite{ambainis2014physical,gharibian2019complexity}.

We have to show that we indeed have produced a set of density matrices that is approximately consistent with a low-energy state of $H$. To do this, we need to bound how much the energy of the obtained state grows as we collect more and more density matrices. Consider an arbitrary step $l$. If a query to the $\QMA$ oracle returns {\sc yes} for some sampled $\rho_l$, we can be certain that there exists a state $\xi_l$ with energy $\leq a_l +\delta$ such that $\norm{\rho_j - \tr_{\overline{C}_j} [\xi_l]}_1 \leq \epsilon$ for all $j \in [l]$. Let $\xi:= \xi_{|I|}$. Hence, for the last step ($l=|I|$) we must then have that $
    \tr[H\xi]  \leq a_{|I|} +\delta \leq \lambda_0 + a$,
for our choice of $\delta$, so $\xi$ is a low energy state with energy $\leq \lambda_0 + a$. We have that $
            \norm{\rho_j - \tr_{\overline{C}_j} [\xi]}_1 \leq \epsilon
$ for all $j \in [|I|]$ is trivially satisfied in the end, as at every intermediate value of $l$ it is guaranteed that $
            \norm{\rho_j - \tr_{\overline{C}_j} [\xi]}_1 \leq \epsilon
$ for all $j \in [l]$, as $\beta = \epsilon$. Therefore, all that is needed to ensure correctness is to prove that our choice for $T$ is large enough to succeed with high probability.

\paragraph{Complexity: } Step 1 makes $\mO(\log n)$ queries to the $\QMA$ oracle for any $\delta= \Omega(1/\poly(n))$. By~\cref{lem:sampl_dm}, we have that Step 2a of~\cref{alg1} samples a $q$-qubit reduced density matrix $\rho_j$ with trace distance $\leq \epsilon/2$ to $\tr_{\overline{C}_j}[\xi] $ with probability at least $\left( \frac{\epsilon}{2}\right)^{2(2^q-1)}$,
which means that 
\begin{align*}
    \mathbb{E}[\text{Number of samples until a single iteration of step 2 finishes}] \leq  \left( \frac{2}{\epsilon}\right)^{2(2^q-1)}.
\end{align*}
By linearity of the expectation value, we have that
\begin{align*}
    \mathbb{E}[\text{number of steps performed until~\cref{alg1} halts}] \leq |I| \left( \frac{2}{\epsilon}\right)^{2(2^q-1)}=:T'.
\end{align*}
By Markov's inequality, we can turn this into an algorithm which succeeds with probability $\geq 2/3$ by setting $T = 3T'$. Since $|I| = \mO(n^q)$ and $\epsilon = \Omega(1/\poly(n))$, the runtime is polynomial when $q \in \mO(1)$.
\end{proof}

It is easy to show that if  $H$ has a unique ground state and an inverse polynomially bounded spectral gap, then we can guarantee that~\cref{alg1} finds density matrices that come from a state that is close to the actual ground state.

\begin{corollary} Suppose $H$ has a unique ground state $\ket{\psi_0}$ with ground state energy $\lambda_0$ and spectral gap $\Delta = 1/\poly(n) $. Then under the same assumptions as~\cref{thm:H_alg_dens}, for any $\epsilon' = \Omega(1/\poly(n))$, $q \in \mO(1)$ there exists a randomized algorithm that makes queries to a $\QMA$ oracle which with probability $\geq 2/3$ outputs a set of $q$-local density matrices $\{\rho_j\}$ such that for all $j \in [|I|]$ we have that $
            \norm{\rho_j - \tr_{\overline{C}_j} [\ketbra{\psi_0}]}_1 \leq \epsilon'. $   
\label{cor:1}
\end{corollary}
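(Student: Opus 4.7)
The plan is to invoke \cref{thm:H_alg_dens} with carefully chosen parameters $a$ and $\epsilon$, and then use the spectral gap to argue that the low-energy state produced by the theorem must be close to the ground state in trace distance; since trace distance is monotonically non-increasing under the partial trace, its marginals are then also close to those of $\ket{\psi_0}$.

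First, I would set $\epsilon := \epsilon'/2$ and $a := \Delta \cdot (\epsilon'/4)^2$. Both are $\Omega(1/\poly(n))$ since $\Delta = 1/\poly(n)$ and $\epsilon' = \Omega(1/\poly(n))$, so the hypotheses of \cref{thm:H_alg_dens} are satisfied. Running the algorithm from that theorem yields, with probability $\geq 2/3$, a set of $q$-local density matrices $\{\rho_j\}_{j \in I}$ such that there exists a state $\xi$ with $\tr[H\xi] \leq \lambda_0 + a$ and $\norm{\rho_j - \tr_{\overline{C}_j}[\xi]}_1 \leq \epsilon$ for every $j \in [|I|]$.

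Next, I would use the spectral gap to bound the overlap of $\xi$ with $\ket{\psi_0}$. Expanding $\xi$ in the eigenbasis $\{\ket{\psi_k}\}$ of $H$ and letting $p_k := \bra{\psi_k}\xi\ket{\psi_k}$, the low-energy condition gives
\begin{align*}
\lambda_0 + a \;\geq\; \tr[H\xi] \;=\; \sum_k p_k \lambda_k \;\geq\; \lambda_0 p_0 + (\lambda_0 + \Delta)(1 - p_0) \;=\; \lambda_0 + \Delta(1-p_0),
\end{align*}
so $\bra{\psi_0}\xi\ket{\psi_0} = p_0 \geq 1 - a/\Delta$. By the Fuchs--van de Graaf inequality applied to the pure state $\ketbra{\psi_0}$ and the mixed state $\xi$, this yields $\frac{1}{2}\norm{\xi - \ketbra{\psi_0}}_1 \leq \sqrt{1 - p_0} \leq \sqrt{a/\Delta}$, hence $\norm{\xi - \ketbra{\psi_0}}_1 \leq 2\sqrt{a/\Delta} = \epsilon'/2$ by the choice of $a$.

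Finally, I would combine the two estimates. Since the trace norm is contractive under the partial trace, for every $j \in [|I|]$,
\begin{align*}
\norm{\rho_j - \tr_{\overline{C}_j}[\ketbra{\psi_0}]}_1 \;\leq\; \norm{\rho_j - \tr_{\overline{C}_j}[\xi]}_1 + \norm{\tr_{\overline{C}_j}[\xi] - \tr_{\overline{C}_j}[\ketbra{\psi_0}]}_1 \;\leq\; \epsilon + 2\sqrt{a/\Delta} \;\leq\; \epsilon',
\end{align*}
by the triangle inequality and our parameter choices. Nothing here is genuinely difficult; the main step that must be handled cleanly is the argument that a low-energy state has large overlap with $\ket{\psi_0}$ (which is standard but requires the non-degeneracy assumption to single out a unique ground state so that the overlap-to-trace-distance conversion gives a pure-state comparison). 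Uniqueness of $\ket{\psi_0}$ is crucial because otherwise the spectral-gap argument would only pin down the projection onto the ground space, not a specific set of marginals.
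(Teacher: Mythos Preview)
Your proof is correct and essentially identical to the paper's own argument: the paper makes the same parameter choices $\epsilon = \epsilon'/2$ and $a = \epsilon'^2\Delta/16$ (which equals your $\Delta(\epsilon'/4)^2$), derives the same overlap bound $\tr(\Pi_0\xi)\geq 1-a/\Delta$ from the gap, converts to trace distance via $\norm{\xi-\ketbra{\psi_0}}_1\leq 2\sqrt{a/\Delta}$, and finishes with the same triangle-inequality-plus-partial-trace step.
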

\begin{proof}
We only need to show that parameter settings for $a$ and $\epsilon$ in~\cref{alg1} exist such that the corollary holds. We have that for any choice of $a = \Omega(1/\poly(n))$ and $\epsilon = \Omega(1/\poly(n))$, there exists a density matrix $\xi'$ such that for all $j \in [|I|]$ with energy $\lambda_0 + a$ we have
\[
    \norm{\sigma_j - \tr_{\overline{C}_j} [\xi']}_1 < \epsilon.
\]
Writing $H$ in its eigendecomposition, the spectral gap promise gives
\begin{align*}
    \tr(H \xi') &= \tr\left(\sum_{i} \lambda_i \ketbra{\psi_i} \xi'\right) \\
    &\geq \lambda_0 \tr(\Pi_0 \xi') + (\lambda_0 + \Delta) \tr\left(\left(\mathbb{I} - \Pi_0\right) \xi'\right) \\
    &= \lambda_0 + \Delta (1 - \tr(\Pi_0 \xi')),
\end{align*}
where $\Pi_0 = \ketbra{\psi_0}$.  Since the energy of $\xi'$ is at most $\lambda_0 + a$, we have
\[ 
    \lambda_0 + \Delta (1 - \tr(\Pi_0 \xi')) \leq \lambda_0 + a.
\]
Rearranging, we obtain 
\[ 
   \tr(\Pi_0 \xi') \geq 1 - \frac{a}{\Delta} .
\]
To bound the trace distance, we use our bound on $\tr(\Pi_0 \xi')$ and the relation between fidelity and trace distance to find
\begin{align*}
     \norm{\xi' - \ketbra{\psi_0}}_1 \leq 2 \sqrt{1-\tr(\Pi_0 \xi')} \leq 2 \sqrt{\frac{a}{\Delta}}.
\end{align*}
To relate this to the trace distance with any of the reduced states $\sigma_j$, we use the subadditivity of the trace norm and the fact that the trace distance cannot increase under the partial trace. For all $j \in [|I|]$, we then have 
\begin{align*}
    \norm{\tr_{\overline{C}_j} [\ketbra{\psi_0}] - \sigma_j}_1 &\leq \norm{\tr_{\overline{C}_j} [\ketbra{\psi_0}] - \tr_{\overline{C}_j} [\xi']}_1 + \norm{\tr_{\overline{C}_j} [\xi'] - \sigma_j}_1 \\
    &\leq \norm{\ketbra{\psi_0} - \xi'}_1 + \epsilon \\
    &\leq 2 \sqrt{\frac{a}{\Delta}}  + \epsilon.
\end{align*}
Finally, to satisfy the corollary, we set $\epsilon := \epsilon' / 2$ and choose $a = \epsilon'^2 \Delta / 16$. Then, the final trace distance can be bounded as
\[
    \norm{\tr_{\overline{C}_j} [\ketbra{\psi_0}] - \sigma_j}_1 \leq \frac{\epsilon'}{2} + \frac{\epsilon'}{2} = \epsilon',
\]
as desired.
\end{proof}
If the ground space is degenerate and has a gap $\Delta$ between the two smallest \emph{distinct} eigenvalues, we have that~\cref{cor:1} holds with respect to finding a state that is close to an arbitrary state in the ground space of $H$.

\subsection{Derandomization}
\label{subsec:derandomizaton}
The above construction can easily be derandomized by replacing the random sampling of unitaries with a brute-force search over a discretized set of local density matrices. We first introduce the notion of an $\epsilon$-covering set of density matrices.
\begin{definition}[$\epsilon$-covering set of density matrices] Let $\mathcal{H}$ be some $d$-dimensional Hilbert space. We say a discrete set of density matrices $D^d_\epsilon = \{\rho_i \} \subseteq \mathcal{D}(\mathcal{H})$ is $\epsilon$-covering for $\mathcal{D}(\mathcal{H})$ if for all $\sigma \in \mathcal{D}(\mathcal{H})$ there exists a $\rho_i \in D^d_\epsilon$ such that $\frac{1}{2}\norm{\rho_i - \sigma}_1 \leq \epsilon$.
\label{def:PEDMS}
\end{definition}
We will proceed by showing that, for any $\epsilon$ that is inverse polynomially small, one can construct such an $\epsilon$-covering set that is not too large.

\begin{lemma} Every $U \in \mathbb{SU}(2^n)$ can be implemented using $\mO(n^2 4^n)$ $\textup{CNOT}$ and $1$-qubit gates. 
\label{lem:2gatesuniversal}
\end{lemma}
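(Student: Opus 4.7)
The plan is to prove the lemma via the standard three-step chain of decompositions used to establish universality of $\{\text{CNOT}\} \cup \{\text{single-qubit gates}\}$, tracking gate counts at each step. I will follow the classical route due to Barenco et al.\ (essentially the proof in Nielsen \& Chuang, Chapter~4), which proceeds: (a)~decompose an arbitrary special-unitary into two-level unitaries, (b)~realize each two-level unitary using a multi-controlled single-qubit gate conjugated by Gray-code permutation gates, and (c)~expand each multi-controlled single-qubit gate into a circuit of CNOTs and single-qubit gates.

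\textbf{Step 1: Two-level decomposition.} I would first show that any $U \in \mathbb{SU}(2^n)$, acting on a $d$-dimensional space with $d = 2^n$, can be written as a product of at most $\binom{d}{2} = \mathcal{O}(4^n)$ two-level unitaries, i.e., unitaries that act nontrivially only on a two-dimensional subspace $\mathrm{span}\{\ket{s},\ket{t}\}$ of computational basis states. This is done by successively zeroing out entries of the first column of $U$ using appropriate two-level unitaries (acting as $2 \times 2$ rotations on pairs of amplitudes), then proceeding inductively on the lower-right block. The column-zeroing in column $k$ requires at most $d-k$ two-level operations, yielding the $\binom{d}{2}$ total.

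\textbf{Step 2: Gray-code reduction to multi-controlled gates.} Next I would show that each two-level unitary acting on $\mathrm{span}\{\ket{s},\ket{t}\}$ can be implemented by a sequence of CNOTs plus a single multi-controlled one-qubit gate (an $(n-1)$-controlled $V$ for some $V \in \mathbb{SU}(2)$). The construction uses a Gray code from $s$ to $t$, which is a sequence of at most $n$ basis states each differing from the next in a single bit, to permute $\ket{s}$ to a state differing from $\ket{t}$ only in one bit, apply the nontrivial $2 \times 2$ unitary as an $(n-1)$-controlled single-qubit gate on that bit, and then undo the permutation. Each Gray-code step is implemented by a (possibly multi-controlled) NOT; expanding these also via the controlled-gate decomposition from Step~3 contributes $\mathcal{O}(n^2)$ gates per Gray code step and $\mathcal{O}(n)$ steps, i.e., $\mathcal{O}(n^2)$ gates per two-level unitary beyond the central multi-controlled $V$. (One can alternatively just bound the whole two-level unitary's cost by $\mathcal{O}(n^2)$ using Step~3 directly.)

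\textbf{Step 3: Multi-controlled single-qubit gates.} Finally I would invoke the Barenco et al.\ decomposition showing that any $(n-1)$-controlled single-qubit unitary can be realized by $\mathcal{O}(n^2)$ CNOTs and single-qubit gates without ancillae. The idea is to write $V = W^2$ and then express the $(n-1)$-controlled $V$ recursively in terms of an $(n-2)$-controlled $W$, an $(n-2)$-controlled $W^\dagger$, and a Toffoli gate, where each multi-controlled Toffoli with $m$ controls further decomposes into $\mathcal{O}(m)$ basic gates; unrolling the recursion gives the $\mathcal{O}(n^2)$ bound.

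\textbf{Putting it together.} Multiplying: $\mathcal{O}(4^n)$ two-level unitaries $\times$ $\mathcal{O}(n^2)$ CNOTs and single-qubit gates per two-level unitary gives the claimed $\mathcal{O}(n^2 \cdot 4^n)$ total. The main obstacle is really just Step~3 (the multi-controlled single-qubit decomposition), which is a somewhat intricate recursive construction; since this is a well-known result I would cite Barenco et al.\ rather than reproduce it, and similarly cite the standard two-level decomposition lemma, keeping the proof in the paper to a short paragraph that assembles the counts.
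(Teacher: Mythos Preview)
Your proposal is correct and is precisely the argument the paper has in mind: the paper does not give its own proof but simply cites Nielsen and Chuang, Chapter~4, and what you have outlined is exactly the three-step decomposition (two-level unitaries, Gray-code reduction, Barenco et al.\ multi-controlled gate expansion) presented there with the same gate counts.
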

For a proof, see Nielsen and Chang, Chapter 4~\cite{nielsen2010quantum}. By the Solovay-Kitaev theorem, one can approximate $U \in \mathbb{SU}(2)$ up to error $\epsilon$ in diamond norm using at most $\mO(\log^c(1/\epsilon) $ for some $c>1 $, using \textit{any} inverse-closed universal gate set. However, for our purposes we need the optimal scaling of $c=1$~\cite{harrow2002efficient}. However, many sets are known to exist that achieve this for $\mathbb{SU}(2)$, see for example~\cite{harrow2002efficient,ross2014optimal,forest2015exact,bocharov2015efficient,kliuchnikov2015practical,parzanchevski2018super}. Since all we care about is that the gates \textit{can} optimally efficiently approximate a unitary in  $\mathbb{SU}(2)$ and \textit{not} that one can also find the sequence efficiently, we simply use the gate set used in~\cite{harrow2002efficient} (which comes from~\cite{lubotzky1986hecke}).

\begin{lemma}[Adapted from~\cite{harrow2002efficient}] There exist a universal gate set  $\mathcal{G}$ with $|\mathcal{G}|=3$ such that for every $U \in \mathbb{SU}(2)$, there exists a circuit that uses only gates from $\mathcal{G}$ and approximates $U$ up to error $\epsilon$ in diamond norm using at most $\mO\left(\log(1/\epsilon) \right)$ gates.
\label{lem:SK_linear}
\end{lemma}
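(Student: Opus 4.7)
The plan is to exhibit an explicit three-element gate set and then invoke Lubotzky-Phillips-Sarnak (LPS)-type equidistribution, following~\cite{harrow2002efficient}. Concretely, I would take the three $\mathbb{SU}(2)$-matrices arising from the six integer quaternions of a fixed small prime norm (one representative per inverse pair), as in~\cite{lubotzky1986hecke}. The key structural fact is that these generators are \emph{free} in $\mathbb{SU}(2)$, so the number of distinct group elements expressible as a word of length at most $L$ in the generators and their inverses grows as $\Theta(c^L)$ for an explicit constant $c > 1$.

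Next I would run a matching volume argument. Since $\mathbb{SU}(2)$ is a compact $3$-dimensional Lie group, an $\epsilon$-ball with respect to the operator norm (which is equivalent to the diamond norm on single-qubit unitaries up to a constant factor) has Haar volume $\Theta(\epsilon^3)$, so any $\epsilon$-net of $\mathbb{SU}(2)$ has size $\Omega(1/\epsilon^3)$. Provided the length-$L$ words in $\Gamma$ \emph{do} form such a net, solving $c^L \gtrsim 1/\epsilon^3$ then yields the desired $L = \mO(\log(1/\epsilon))$ upper bound, and the approximating circuit for any $U$ is simply the word in $\Gamma$ closest to $U$.

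The main obstacle, and the only nontrivial input, is precisely the equidistribution step: one has to rule out the length-$L$ words clumping in a small region of $\mathbb{SU}(2)$, so that they genuinely cover the whole group at the volume-optimal rate. This is where the deep number-theoretic ingredient enters, via the Ramanujan property of the Hecke operators associated to $\Gamma$ acting on $L^2(\mathbb{SU}(2))$; the sharp spectral-gap bound (Deligne's theorem, resolving the Ramanujan conjecture in this setting) forces the convolution powers of the generating measure to mix exponentially fast, which is exactly equivalent to the desired $\epsilon$-net property. I would invoke this package as a black box from~\cite{lubotzky1986hecke,harrow2002efficient}.

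Finally, since~\cref{lem:SK_linear} only asserts \emph{existence} of a short approximating circuit and not an efficient algorithm producing it, no effective Solovay-Kitaev-style compiler is required — in principle a brute-force enumeration of words up to length $L = \mO(\log(1/\epsilon))$ locates the approximator, which is all that is subsequently needed in~\cref{subsec:derandomizaton} to bound the size of the $\epsilon$-covering set of density matrices.
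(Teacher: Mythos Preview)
Your proposal is correct and in fact goes further than the paper itself: the paper does not prove \cref{lem:SK_linear} at all but simply cites it as a known result from~\cite{harrow2002efficient}, noting that the underlying gate set comes from~\cite{lubotzky1986hecke}. Your sketch---the LPS integer-quaternion generators, freeness giving exponential word growth, the volume/counting argument, and the Ramanujan spectral gap to guarantee equidistribution---is precisely the content of those cited references, so you have essentially reconstructed the black box the paper invokes. Your final observation that only existence (not an efficient compiler) is needed is also exactly the point the paper makes in the text preceding the lemma.
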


We now have all the necessary ingredients to give a method to construct $\epsilon$-covering sets of density matrices in polynomial time for any constant number of qubits.
\begin{lemma} For all $q \in \mathbb{N}$ constant, $0< \epsilon < 1$, there exists a polynomial-time algorithm that constructs a $\epsilon$-covering set of density matrices $D_\epsilon^{2^q}$ of size at most $\poly(1/\epsilon)$
in time $\poly(1/\epsilon)$.
\end{lemma}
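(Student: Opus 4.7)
The plan is to parametrize the set of $q$-qubit density matrices by $(2q)$-qubit pure states via purification and then discretize the unitary group that generates those purifications. Concretely, every $\sigma \in \mathcal{D}(\mathcal{H})$ with $\mathcal{H} = (\mathbb{C}^2)^{\otimes q}$ can be written as $\sigma = \tr_B \!\left[ U \proj{0^{2q}} U^{\dagger} \right]$ for some $U \in \mathbb{SU}(2^{2q})$, where $B$ denotes the last $q$ qubits of the purifying system. If, for every such $U$, we can exhibit an approximant $\tilde{U}$ in our enumeration with $\norm{U - \tilde{U}}_\diamond \leq 2\epsilon$, then by the operational interpretation of the diamond norm followed by monotonicity of the trace distance under the partial trace, the induced reduced states will satisfy $\tfrac{1}{2}\norm{\sigma - \tilde{\sigma}}_1 \leq \epsilon$, which is exactly the covering condition of~\cref{def:PEDMS}.

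To build the enumeration, I first invoke~\cref{lem:2gatesuniversal} with $n = 2q$: every $U \in \mathbb{SU}(2^{2q})$ decomposes exactly into $N := \mO((2q)^2 4^{2q})$ CNOT and single-qubit gates, which is $\mO(1)$ since $q$ is constant. Next, by~\cref{lem:SK_linear}, each single-qubit gate can be approximated in diamond norm to error $\eta := \epsilon / (cN)$ by a circuit of length $\mO(\log(1/\eta)) = \mO(\log(1/\epsilon))$ over the fixed universal single-qubit gate set $\mathcal{G}$ with $|\mathcal{G}| = 3$, for a suitable constant $c$ chosen so that, by subadditivity of the diamond norm under composition, the accumulated error along the entire circuit is at most $2\epsilon$. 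The resulting approximant $\tilde{U}$ then has total circuit length $L = N \cdot \mO(\log(1/\epsilon)) = \mO(\log(1/\epsilon))$, with each layer drawn from the finite alphabet consisting of CNOTs on any of the $\binom{2q}{2}$ qubit pairs and elements of $\mathcal{G}$ applied to any of the $2q$ qubits; this alphabet has constant size $A$.

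The construction then enumerates all circuits of length at most $L$ over this alphabet, of which there are at most $A^{L+1} = \poly(1/\epsilon)$, and for each one computes $\tilde{U}\ket{0^{2q}}$ together with its reduction to the first $q$ qubits by direct matrix multiplication. Since $2q = \mO(1)$, every such circuit acts only on constant-dimensional matrices and can be evaluated in $\mO(L)$ arithmetic operations, so both the total running time and the final size of $D_\epsilon^{2^q}$ are $\poly(1/\epsilon)$. The step I expect to require the most care is the error bookkeeping: one must pin down the constants relating operator norm, diamond norm, and trace distance, choose the Solovay--Kitaev precision $\eta$ so that the per-layer errors add up to the $2\epsilon$ budget on the full circuit, and verify that the multiplicative factor $N$ absorbed into $\eta$ is genuinely a constant depending only on $q$, so that $L$ stays logarithmic in $1/\epsilon$ and the final enumeration size indeed remains $\poly(1/\epsilon)$.
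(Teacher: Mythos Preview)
Your proposal is correct and follows essentially the same approach as the paper: purify to $2q$ qubits, decompose an arbitrary $U\in\mathbb{SU}(4^q)$ into a constant number of CNOTs and single-qubit gates via~\cref{lem:2gatesuniversal}, approximate each single-qubit gate to precision $\mO(\epsilon)$ using~\cref{lem:SK_linear}, and enumerate all resulting circuits of length $\mO(\log(1/\epsilon))$ over a constant-size alphabet. Your error bookkeeping via the diamond norm and monotonicity under partial trace is slightly more explicit than the paper's, but the argument is the same.
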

\begin{proof}
Just as in~\cref{lem:sampl_dm}, we know that for each $q$-qubit density matrix $\rho$ there exists a $2q$-qubit purification $\ket{\xi}$ and that the fidelity between two density matrices is equal to the largest overlap between two purifications of those density matrices. Therefore, it suffices to create an $\epsilon$-net for ${4^{q}}$-dimensional pure states, which can be created by considering an approximation of $\mathbb{SU}(4^q)$. 

Let $\mathcal{G}'$ be the gate-set from~\cref{lem:SK_linear}, and $\mathcal{G}$ be the gate set which contains all gates from $\mathcal{G}'$ with the $\text{CNOT}$-gate added to it.  Note that the global phase is irrelevant when considering the density matrices, so it suffices to work only with $U \in \mathbb{SU}(4^{q})$. We construct the $\epsilon$-covering set of density matrices  $D^{2^q}_\epsilon = \{\rho_i\} $  such that  $\rho_i  = \tr_{B} \ket{\psi_i}\bra{\psi_i}$, where  $\ket{\psi_i} = U_i \ket{0 \dots 0}$ for an enumeration over all possible $U_i$ using a certain amount of gates from the set $\mathcal{G}$ such that any possible $U_i$ can be approximated up to error $\epsilon$. By~\cref{lem:2gatesuniversal}, we need at most $m := C_1 q^2 4^{2q}$ $\textup{CNOT}$s and $1$-qubit gates, where $C_1 > 0$ is some constant. Approximating the $1$-qubit gates with gates from $\mathcal{G}'$ and using that that errors in unitary approximation accumulate linearly,  we have that by~\cref{lem:SK_linear} the maximum needed circuit depth using the set $\mathcal{G}$ to approximate any $U \in \mathbb{SU}(4^{q})$ up to error $\epsilon$ can upper bounded as $C_2 m \log(m/\epsilon)$ for some constant $C_2 > 0$. Hence, using that $|\mathcal{G}|=4$, the total number of possible circuits can be upper bounded as
\begin{align*}
    \left(4 \binom{2q}{2} \right)^{C_2 m \log(m/\epsilon)} &\leq \left(16q^2\right)^{C_2 m \log(m/\epsilon)}\\
    &=\left(16q^2\right)^{C_2 m \log(m)} \left(1/\epsilon\right)^{C_2 m \log \left((16q^2\right)} = \poly(1/\epsilon)
\end{align*}
when $q$ is constant. Since we can efficiently enumerate over all these possible circuits (as there are only an inverse polynomial of them, we can efficiently generate $D^{2^q}_\epsilon$. This also implies that $|D^{2^q}_\epsilon| = \poly(1/\epsilon)$, as desired.
\end{proof}

We can now derandomize~\cref{alg1}, replacing the sampling in Step 2a by picking a density matrix from the set $D^{2^q}_\epsilon$, giving the following corollary. It is easy to modify the parameter $T$ in~\cref{alg1}  such that the criteria of the corollary below are met.

\begin{corollary} Let $H = \sum_{i \in [m]} H_i$ be a $k$-local Hamiltonian on $n$ qubits of $m = \poly(n)$ terms $H_i$, $0 \preceq H_i \preceq 1$. Let $q \in \mathbb{N}$ some constant and $a \in [1/\poly(n),m]$ and $\beta = \Omega(1/\poly(n))$ be input parameters. Let $I = [n]^q$ and write $C_j$ for the $j$th element in $I$. Then there exists a polynomial-time algorithm making queries to a $\QMA$ oracle which  outputs a set of $q$-local density matrices $\{\rho_j\}_{j \in I}$ for which there exists a $\xi$ which satisfies $\tr[H\xi] \leq \lambda_0 + a$,  such that for all $j \in [|I|]$ we have $
            \norm{\rho_j - \tr_{\overline{C}_j} [\xi]}_1 \leq \epsilon
$.
\label{cor:H_alg_dens_deterministic}
\end{corollary}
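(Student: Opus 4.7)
The plan is to modify Algorithm~\ref{alg1} by replacing the Haar-random sampling of a $q$-qubit density matrix in Step 2a with a brute-force loop over all density matrices in the set $D^{2^q}_{\epsilon'}$ from the preceding lemma, where $\epsilon'$ is chosen to be a small enough constant fraction of $\alpha = \epsilon/2$ (say $\epsilon' = \epsilon/4$). At each step $l$, we iterate through $D^{2^q}_{\epsilon'}$ and for each candidate $\rho_l$ make a single query to the $\QMA$ oracle with the $\LEDMV$ instance $(H, \{\rho_j\}_{j\in[l]}, a_l, \delta, \alpha, \beta)$; the first {\sc yes} answer encountered is used to commit $\rho_l$, after which we advance to step $l+1$. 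The new ``$T$'' from Algorithm~\ref{alg1} is replaced by the guarantee that the inner loop terminates after at most $|D^{2^q}_{\epsilon'}|$ queries per step.

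Correctness follows the same inductive argument as in~\cref{thm:H_alg_dens}. Suppose at step $l$ there exists a state $\xi_{l-1}$ with $\tr[H\xi_{l-1}] \le a_{l-1}+\delta$ consistent (within $\beta$) with all previously chosen $\{\rho_j\}_{j<l}$; take $\tau := \tr_{\overline{C}_l}[\xi_{l-1}]$. By \cref{def:PEDMS}, there exists $\rho \in D^{2^q}_{\epsilon'}$ with $\frac{1}{2}\|\tau - \rho\|_1 \le \epsilon'$, and for such $\rho$ the $\LEDMV$ instance has a valid {\sc yes}-witness (namely $\xi_{l-1}$ itself, since $\alpha = \epsilon/2 \ge 2\epsilon'$ absorbs the covering error). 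Hence some iteration of the inner loop triggers a {\sc yes} response, and the energy/consistency bookkeeping from the proof of~\cref{thm:H_alg_dens} carries over verbatim: every {\sc yes} answer, even when placed inside the promise gap, witnesses the existence of a state with energy $\le a_l + \delta$ whose $j$-th reduced density matrix is within $\beta = \epsilon$ of $\rho_j$ for all $j \le l$. Iterating to $l = |I|$ and invoking the choice $\delta = a/(|I|+1)$ yields the desired state $\xi$ with $\tr[H\xi] \le \lambda_0 + a$ and $\|\rho_j - \tr_{\overline{C}_j}[\xi]\|_1 \le \epsilon$ for all $j$.

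For the complexity, Step 1 again uses $\mO(\log n)$ oracle queries. In Step 2, there are $|I| = \mO(n^q)$ outer iterations, and each performs at most $|D^{2^q}_{\epsilon'}| = \poly(1/\epsilon')$ oracle queries; since $q$ is constant and $\epsilon = \Omega(1/\poly(n))$, both factors are polynomial, giving an overall polynomial number of deterministic oracle queries.

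The main obstacle (and really the only subtle point) is bookkeeping the interplay between the three error sources: the promise-gap slack $\beta - \alpha$ of the $\LEDMV$ oracle, the covering radius $\epsilon'$, and the per-step energy drift $\delta$. One must pick $\alpha, \beta, \epsilon', \delta$ so that (a) $\epsilon' \le (\beta - \alpha)/2$ so the covering set always contains a feasible candidate for a genuine {\sc yes} instance (guaranteeing the inner loop succeeds), (b) $\beta \le \epsilon$ so the final consistency bound holds, and (c) $|I|\cdot\delta \le a$ so the energy inflation stays within $\lambda_0 + a$. All three are easily satisfied by taking, e.g., $\alpha = \epsilon/2$, $\beta = \epsilon$, $\epsilon' = \epsilon/4$, and $\delta = a/(|I|+1)$ as in~\cref{alg1}. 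Everything else is a direct transcription of the proof of~\cref{thm:H_alg_dens} with the random-sampling step replaced by the deterministic enumeration.
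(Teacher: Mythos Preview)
Your high-level approach---derandomizing \cref{alg1} by replacing the Haar sampling with a brute-force enumeration over an $\epsilon$-covering set of density matrices---is exactly what the paper does; indeed the paper's entire ``proof'' of this corollary is the single sentence that one should pick density matrices from $D^{2^q}_\epsilon$ instead of sampling, and your parameter choices and query count are in line with that.

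However, the more detailed inductive argument you give for why the inner loop always terminates contains a real gap. You assert that $\xi_{l-1}$ is a valid {\sc yes}-witness for the $\LEDMV$ instance at step $l$. But the {\sc yes} condition demands $\|\rho_j - \tr_{\overline{C}_j}[\xi_{l-1}]\|_1 \le \alpha$ for \emph{every} $j \le l$, whereas your own inductive hypothesis (derived from ``the previous oracle answer was {\sc yes}, hence the instance was not {\sc no}'') only gives $\|\rho_j - \tr_{\overline{C}_j}[\xi_{l-1}]\|_1 < \beta$ for $j < l$. Since $\beta = \epsilon > \epsilon/2 = \alpha$, the state $\xi_{l-1}$ merely refutes the {\sc no} condition at step $l$; it does not certify a genuine {\sc yes}. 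The instance may therefore lie in the promise gap for \emph{every} $\rho_l$ in the covering set, and an adversarial oracle is then free to answer {\sc no} throughout, so you have not shown the inner loop terminates. (The paper's complexity analysis of the randomized version is equally silent on which global state the sample should be close to, so this subtlety is inherited rather than introduced by you.) A clean repair is to let the thresholds grow with the step: set $\alpha_l := l\gamma$, $\beta_l := (l+1)\gamma$ with $\gamma := \epsilon/(|I|+1)$ and covering radius $\epsilon' \le \gamma/2$. Then ``not {\sc no}'' at step $l-1$ yields consistency $<\beta_{l-1} = l\gamma = \alpha_l$, so $\xi_{l-1}$ genuinely \emph{is} a {\sc yes}-witness at step $l$ for the covering-set element closest to $\tr_{\overline{C}_l}[\xi_{l-1}]$, forcing the oracle to answer {\sc yes}; all gaps $\beta_l-\alpha_l=\gamma$ remain $\Omega(1/\poly(n))$ and the final $\beta_{|I|}\le\epsilon$, so the rest of your bookkeeping goes through unchanged.
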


Note that this would also apply to~\cref{cor:1}.
\label{final_remark}
As a final remark, it seems also possible to modify~\cref{prot1} and~\cref{alg1} to find the entries of the density matrices on a bit-by-bit basis (this would also not require any randomness). To see this, note that~\cref{prot1} can easily be modified to instead work with partial descriptions of density matrices (where only some entries are specified up to a certain number of bits). However, this comes at the cost that $T$ in~\cref{alg1}, and thus $a_l$ in Step 2b, grows much larger as every time you move to a new partial assignment you incur an uncertainty error (see the discussion on~\hyperref[final_remark]{Page~\pageref{discussion_promise}}). However, since the number of steps $T$ would still be polynomial (when finding at most a polynomial number of bits per entry), one can simply choose a smaller inverse polynomial for $\gamma$ in~\cref{prot1} to ensure that the error does not grow to large.

\section{Finding marginals of near-optimal $\QMA$ witnesses}
\subsection{Approximately witness-preserving reductions in $\QMA$}
In this section, we demonstrate that density matrices of a near-optimal witness can be found for any problem in $\QMA$. The key idea involves using the Feynman-Kitaev circuit-to-Hamiltonian mapping~\cite{Kitaev2002ClassicalAQ} with a small penalty factor~\cite{Deshpande2020}, which transforms a quantum verification circuit $U_n$, consisting of $T$ gates from a universal set of at most $2$-local gates, which takes an input $x$ and a quantum witness $\ket{\psi} \in \left( \mathbb{C}^2\right)^{\otimes \poly(n)}$, into a $k$-local Hamiltonian of the form
\begin{align}
H^x_{\textup{FK}} = H_\text{in} + H_\text{clock} + H_\text{prop} +  \epsilon_\textup{penalty} H_\text{out},
\label{eq:H_FK}
\end{align}
where the locality $k$ depends on the specific construction used, and $\epsilon_{\textup{penalty}} > 0$. For our purposes, the exact form of these terms is not crucial, but for the $3$-local construction the precise descriptions can be found in~\cite{kempe20033local}.

The ground state of the first three terms, $H_0 := H_\text{in} + H_\text{clock} + H_\text{prop}$, is given by the so-called \textit{history states}, which have zero energy with respect to $H_0$ and are defined as
\begin{align}
    \ket{\eta(\psi)} = \frac{1}{\sqrt{T+1}} \sum_{t = 0}^T U_t \dots U_1 \ket{\psi} \ket{0} \ket{t},
    \label{eq:hist_state}
\end{align}
where $\ket{\psi} \in \left( \mathbb{C}^2\right)^{\otimes \poly(n)}$ is a quantum witness and $t$ represents the time step of the computation. It is easily verified that if $U_n$ accepts $(x, \ket{\psi})$ with probability $p$, then the corresponding history state has energy
\begin{align}
\bra{\eta(\psi)} H^x_{\textup{FK}} \ket{\eta(\psi)} = \epsilon_\textup{penalty} \frac{1-p}{T+1}.
\label{eq:energy_hs}
\end{align}
Moreover, by linearity, we have
\begin{align*}
     \alpha_1 \ket{\eta(\ket{\psi_1})} + \alpha_2  \ket{\eta(\ket{\psi_2})} =  \ket{\eta(\alpha_1 \ket{\psi_1} + \alpha_2 \ket{\psi_2})},
\end{align*}
so any linear combination of history states is in itself a history state. We will also need the following result on the spectral gap of $H_0$, proven in~\cite{aharonov2008adiabatic} (and probably other works). For completeness, we include a proof in~\cref{app:proof_gap} to avoid adopting the $\Omega(\cdot)$ notation from~\cite{aharonov2008adiabatic}.

\begin{restatable}{lemma}{spectralgap}
Suppose $H_\textup{clock}$ is chosen such that the history states are in the null space of $H_0$. Then $H_0$ has a spectral gap $\Delta$ satisfying $\Delta \geq  \frac{1}{(T+1)^2}$.
\label{lem:spec_gap_H0}
\end{restatable}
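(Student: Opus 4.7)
The plan is to exploit the Feynman–Kitaev rotation and reduce the problem to computing the spectral gap of a path-graph Laplacian. Define the isometry
\[
W := \sum_{t=0}^T \bigl(U_t U_{t-1}\cdots U_1\bigr) \otimes \ketbra{t}{t},
\]
with the convention $U_0 = I$, acting as identity on the clock register via the clock basis. The key property of $W$ is that it sends history states to product states:
\[
W^\dagger \ket{\eta(\psi)} \;=\; \ket{\psi}\ket{0}\otimes \ket{+_T}, \qquad \ket{+_T} := \tfrac{1}{\sqrt{T+1}}\sum_{t=0}^T \ket{t}.
\]

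First, I would compute $W^\dagger H_\text{prop} W$. Each term $H_{\text{prop},t}$ of the standard Kitaev propagation Hamiltonian rotates to $\tfrac12\, I \otimes (\ket{t}-\ket{t-1})(\bra{t}-\bra{t-1})$, and summing over $t \in [T]$ gives $W^\dagger H_\text{prop} W = I \otimes L$, where $L = \tfrac12 \sum_{t=1}^T (\ket{t}-\ket{t-1})(\bra{t}-\bra{t-1})$ is (up to a factor of $\tfrac12$) the graph Laplacian of a path on $T+1$ vertices. The eigenvalues of the path Laplacian on $T+1$ vertices are $2(1-\cos(\pi k/(T+1)))$ for $k=0,1,\dots,T$, so $L$ has eigenvalues $1-\cos(\pi k/(T+1))$. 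Its null space is $\mathrm{span}\{\ket{+_T}\}$ and its smallest positive eigenvalue is $1-\cos(\pi/(T+1)) = 2\sin^2(\pi/(2(T+1)))$, which by $\sin(x)\ge 2x/\pi$ on $[0,\pi/2]$ gives $\geq 2/(T+1)^2$.

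Second, I would upgrade the bound from $H_\text{prop}$ to the full Hamiltonian $H_0$. The issue is that $\ker H_\text{prop}$ is strictly larger than $\ker H_0$: under the rotation it equals $\mathcal{H}_\text{data}\otimes\ket{+_T}$, whereas $W^\dagger \ker(H_0)$ equals the history subspace $\mathcal{S}:=\{\ket{\psi}\ket{0}\otimes\ket{+_T}\}$. Decomposing any unit vector $\ket{\phi}\perp \mathcal{S}$ as $\ket{\phi} = c\,\ket{\phi_\parallel} + s\,\ket{\phi_\perp}$ with $\ket{\phi_\parallel}\in\ker(I\otimes L)$ and $\ket{\phi_\perp}\in\ker(I\otimes L)^\perp$, the $H_\text{prop}$ part contributes at least $s^2\cdot 2/(T+1)^2$, while $\ket{\phi_\parallel}$ must lie in $\mathcal{H}_\text{data}\otimes\ket{+_T}$ with some data-component orthogonal to the ``correctly initialised'' subspace; the rotated $H_\text{in}+H_\text{clock}$ then penalises $\ket{\phi_\parallel}$ by a constant amount on this subspace (using that, after averaging against $\ket{+_T}$, the term $H_\text{in}$ becomes a nonzero projector on the data register, scaled by $1/(T+1)$). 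Balancing these two contributions through Kitaev's geometric/nullspace-projection lemma (Kitaev, Shen, Vyalyi, Ch.~14) yields the bound $\Delta \geq 1/(T+1)^2$.

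The main technical step is the combining argument in the second half: the naive projection lemma loses a factor that would give only $\Omega(1/T^3)$. To recover the optimal $\Omega(1/T^2)$ scaling one has to exploit that the rotated $H_\text{in}+H_\text{clock}$ is bounded in norm by a constant independent of $T$ and that its null space intersected with $\mathcal{H}_\text{data}\otimes\ket{+_T}$ is exactly $\mathcal{S}$; plugging these norms into the geometric lemma absorbs the extra $1/T$ factor. The rest is routine bookkeeping of the rotated terms and the eigenvalue estimate for $L$.
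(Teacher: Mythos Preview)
Your first half---the rotation $W$ and the identification $W^\dagger H_{\text{prop}} W = I\otimes L$ with the path Laplacian, together with the eigenvalue estimate $\lambda_1(L)\ge 2/(T+1)^2$---is exactly the paper's argument. Where you diverge is the passage from $H_{\text{prop}}$ to $H_0$. The paper simply invokes $H_{\text{in}},H_{\text{clock}}\succeq 0$ to write $\bra{\phi}H_0\ket{\phi}\ge\bra{\phi}H_{\text{prop}}\ket{\phi}$ for $\ket{\phi}\perp\ker(H_0)$ and then asserts this is $\ge\lambda_1(E)$. You are right that this is the delicate point: since $\ker(H_{\text{prop}})\supsetneq\ker(H_0)$ (a history state with mis-initialised ancilla lies in the former but not the latter), a $\ket{\phi}\perp\ker(H_0)$ can have $\bra{\phi}H_{\text{prop}}\ket{\phi}=0$, so the paper's displayed chain does not follow from the reasons it states. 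Your instinct to supply a genuine combining argument is therefore more careful than what the appendix actually writes down, and the geometric-lemma route is indeed how the cited reference proceeds.

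That said, your mechanism for recovering $\Omega(1/T^2)$ instead of $\Omega(1/T^3)$ is not right. In Kitaev's geometric lemma the controlling quantity is the \emph{angle} between $\ker(H_{\text{prop}})$ and $\ker(H_{\text{in}}+H_{\text{clock}})$ inside $\ker(H_0)^\perp$, not the operator norm of $H_{\text{in}}+H_{\text{clock}}$; and that angle is genuinely small, of order $T^{-1/2}$ (compare a wrongly-initialised history state $\ket{\xi}\otimes\ket{+_T}$ with $\ket{\xi}\otimes\tfrac{1}{\sqrt{T}}\sum_{t\ge 1}\ket{t}$, whose overlap is $\sqrt{T/(T+1)}$). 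So the black-box geometric lemma with $\Lambda=\lambda_1(L)$ really does lose a factor of $T$. What saves the $1/T^2$ scaling is not a norm bound but the fact that $H_{\text{in}}$ \emph{restricted to} $\ker(H_{\text{prop}})$ already has gap $1/(T+1)$ on the bad-ancilla sector; one then balances this against the $H_{\text{prop}}$ contribution on $\ker(H_{\text{prop}})^\perp$ directly, controlling the cross term via Cauchy--Schwarz for the PSD form $H_{\text{in}}$. Even done carefully this yields $c/(T+1)^2$ for some constant $c<1$, so you should not expect to recover the sharp constant $1/(T+1)^2$ from this route either.
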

A key lemma from~\cite{Deshpande2020} demonstrates that, using the Schrieffer-Wolff transformation, one can determine precise energy intervals based on the acceptance probabilities of the verification circuit for the low-energy subspace of the Hamiltonian in~\cref{eq:H_FK}, provided that $\epsilon_\textup{penalty}$ is sufficiently small.\footnote{This lemma is not listed as a formal lemma in~\cite{Deshpande2020}, but can constructed from the text as found in Appendices A and B~\cite{Deshpande2020}.}

\begin{lemma}[Small-penalty clock construction, adopted from~\cite{Deshpande2020}] Let $U_n$ be a $\QMA$-verification circuit for inputs $x$, $|x|=n$, where $U_n$ consists of $T = \poly(n)$ gates from some universal gate-set using at most $2$-local gates. Denote $P(\psi)$ for the probability that $U_n$ accepts $(x,\ket{\psi})$, and let $H^x_{\textup{FK}}$ be the corresponding $3$-local Hamiltonian from the circuit-to-Hamiltonian mapping in~\cite{kempe20033local} with a $\epsilon_{\textup{penalty}}$-factor in front of $H_\text{out}$, as in~\cref{eq:H_FK}. Then for all $\epsilon_{\textup{penalty}} \leq \Delta/16$, we have that the low-energy subspace $\mathcal{S}_{\epsilon_\textup{penalty}}$ of $H$, i.e. $
    S_{\epsilon_\textup{penalty}} = \text{span} \{ \ket{\Phi} : \bra{\Phi} H \ket{\Phi} \leq {\epsilon_\textup{penalty}} \} $,
has that its eigenvalues $\lambda_i$ satisfy
\begin{align}
    \lambda_i \in \left[{\epsilon_\textup{penalty}} \frac{1-P(\psi_i)}{T+1} - c_0 \frac{\epsilon^2_{\textup{penalty}}}{\Delta}, {\epsilon_\textup{penalty}} \frac{1-P(\psi_i)}{T+1} + c_0 \frac{\epsilon^2_{\textup{penalty}}}{\Delta} \right],
    \label{eq:spclock}
\end{align}
for some universal constant $c_0 >0$.
\label{lem:spcc}
\end{lemma}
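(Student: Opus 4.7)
The plan is to treat $H^x_{\textup{FK}}$ as a perturbation problem: set $H_0 = H_\text{in} + H_\text{clock} + H_\text{prop}$ (which by construction has the history states $\ket{\eta(\psi)}$ as its zero-energy null space and, by the previous lemma, a spectral gap $\Delta \geq 1/(T+1)^2$) and regard $V := \epsilon_\textup{penalty} H_\text{out}$ as a small perturbation with $\norm{V} \leq \epsilon_\textup{penalty}$. I would then invoke the Schrieffer--Wolff (SW) transformation, which for $\norm{V}$ strictly smaller than a constant fraction of $\Delta$ (the hypothesis $\epsilon_\textup{penalty} \leq \Delta/16$ is more than enough to make the SW series converge) produces a unitary $e^{S}$ that block-diagonalizes $H_0 + V$ with respect to the splitting induced by $H_0$'s low-energy projector $P_0$ (onto history states) and $Q_0 = I - P_0$.

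Next I would use the standard form of the SW effective Hamiltonian in the low-energy block, which to first order is just $P_0 V P_0$ plus a rigorously controlled remainder of operator norm $\mathcal{O}(\norm{V}^2/\Delta) = \mathcal{O}(\epsilon_\textup{penalty}^2/\Delta)$. The crucial computation is to show that on the history-state subspace, $P_0 H_\text{out} P_0$ is unitarily equivalent (via the isometry $\ket{\psi} \mapsto \ket{\eta(\psi)}$) to $(I - U_n^\dagger \Pi_\text{acc} U_n)/(T+1)$ acting on the witness register: this follows from a direct expansion of $\ket{\eta(\psi)}$ using~\cref{eq:hist_state} and the definition of $H_\text{out}$ as the projector onto a non-accepting output at the final time step, reproducing the energy formula in~\cref{eq:energy_hs}. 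Diagonalizing this effective operator then gives an orthonormal eigenbasis $\{\ket{\eta(\psi_i)}\}$ with eigenvalues $\epsilon_\textup{penalty}(1 - P(\psi_i))/(T+1)$, exactly the centers of the intervals in~\cref{eq:spclock}.

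Finally, I would combine the two pieces: since the SW-rotated Hamiltonian is block-diagonal with low-energy block equal to $\epsilon_\textup{penalty} P_0 H_\text{out} P_0 + R$ where $\norm{R} \leq c_0 \epsilon_\textup{penalty}^2/\Delta$, Weyl's inequality shifts each eigenvalue of the low-energy block by at most $c_0 \epsilon_\textup{penalty}^2/\Delta$ from its unperturbed value. Conjugating back by $e^{-S}$ gives an eigenbasis of the full $H^x_{\textup{FK}}$, and the eigenvalues in the low-energy subspace $\mathcal{S}_{\epsilon_\textup{penalty}}$ land in the stated windows. One also needs to check that the high-energy block cannot contribute spurious eigenvalues below $\epsilon_\textup{penalty}$; this follows because the high-energy SW block is $\geq \Delta - \mathcal{O}(\epsilon_\textup{penalty}^2/\Delta)$, which under the hypothesis $\epsilon_\textup{penalty} \leq \Delta/16$ is strictly larger than $\epsilon_\textup{penalty}$.

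The main obstacle is tracking the SW error with an \emph{explicit} universal constant $c_0$ and verifying convergence of the SW series: one has to bound the nested commutators of $S$ and $V$ using $\norm{V}/\Delta$ as the expansion parameter, and rule out a contribution from spectral leakage between blocks. I would lean on the standard rigorous treatment of SW (e.g. Bravyi--DiVincenzo--Loss) rather than rederive it; once convergence is in hand, the rest of the argument is a mechanical projection computation.
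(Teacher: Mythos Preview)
Your proposal is correct and follows essentially the same approach that the paper indicates: the paper does not give its own proof of this lemma but adopts it from~\cite{Deshpande2020}, explicitly noting that it is obtained ``using the Schrieffer--Wolff transformation'' to pin down the low-energy eigenvalue intervals. Your sketch---treating $H_0$ as the unperturbed Hamiltonian with gap $\Delta$, $V=\epsilon_\textup{penalty}H_\text{out}$ as the perturbation, computing $P_0VP_0$ on the history-state subspace to recover the centers $\epsilon_\textup{penalty}(1-P(\psi_i))/(T+1)$, and bounding the SW remainder by $\mathcal{O}(\epsilon_\textup{penalty}^2/\Delta)$ via Weyl---is exactly that argument, so there is nothing to add.
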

We will also use the following lemma, which shows that states with sufficiently low energy with respect to $H_\textup{FK}^{x}$ must be close to some history state.

\begin{lemma} Let $\ket{\Psi}$ be a state such that $\bra{\Psi} H_\text{FK}^{x} \ket{\Psi} \leq \delta$, where $H_{\textup{FK}}^{x}$ is given in~\cref{eq:H_FK} and let $\Delta$ be the spectral gap of $H_0$. Write $\Pi_\textup{hist}$ for the projector on the subspace spanned by all history states. Then $    \norm{\Pi_\textup{hist} \ket{\Psi}}^2_2 \geq 1- \frac{\delta}{\Delta}$.
\label{lem:overlap_hist}
\end{lemma}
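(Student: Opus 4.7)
The plan is to compare $H_{\textup{FK}}^{x}$ with $H_0$ and exploit the fact that the history states span precisely the null space of $H_0$ (this is the hypothesis of~\cref{lem:spec_gap_H0} and the whole point of $H_\text{clock}$ being chosen as it is).

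First I would observe that $H_\text{out}$ is a positive semidefinite operator (it is the projector $\proj{0}_{\text{out}}\otimes\proj{T}_{\text{clock}}$), and that $\epsilon_\textup{penalty}\geq 0$, so
\begin{equation*}
\bra{\Psi}H_{\textup{FK}}^{x}\ket{\Psi}=\bra{\Psi}H_0\ket{\Psi}+\epsilon_\textup{penalty}\bra{\Psi}H_\text{out}\ket{\Psi}\geq \bra{\Psi}H_0\ket{\Psi}.
\end{equation*}
Hence it suffices to lower bound $\bra{\Psi}H_0\ket{\Psi}$ in terms of $1-\|\Pi_\text{hist}\ket{\Psi}\|_2^2$.

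Next, decompose $\ket{\Psi}=\Pi_\text{hist}\ket{\Psi}+\Pi_\text{hist}^{\perp}\ket{\Psi}$ and set $p:=\|\Pi_\text{hist}\ket{\Psi}\|_2^2$, so that $\|\Pi_\text{hist}^{\perp}\ket{\Psi}\|_2^2=1-p$. Because $\Pi_\text{hist}$ projects onto the null space of $H_0$, it commutes with $H_0$, so the cross terms vanish and
\begin{equation*}
\bra{\Psi}H_0\ket{\Psi}=\bra{\Psi}\Pi_\text{hist}H_0\Pi_\text{hist}\ket{\Psi}+\bra{\Psi}\Pi_\text{hist}^{\perp}H_0\Pi_\text{hist}^{\perp}\ket{\Psi}=\bra{\Psi}\Pi_\text{hist}^{\perp}H_0\Pi_\text{hist}^{\perp}\ket{\Psi}.
\end{equation*}
On the range of $\Pi_\text{hist}^{\perp}$, every eigenvalue of $H_0$ is at least $\Delta$ by~\cref{lem:spec_gap_H0}, so $\Pi_\text{hist}^{\perp}H_0\Pi_\text{hist}^{\perp}\succeq\Delta\,\Pi_\text{hist}^{\perp}$, giving
\begin{equation*}
\bra{\Psi}H_0\ket{\Psi}\geq\Delta(1-p).
\end{equation*}

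Combining the two inequalities yields $\delta\geq\bra{\Psi}H_{\textup{FK}}^{x}\ket{\Psi}\geq\Delta(1-p)$, which rearranges to $p\geq 1-\delta/\Delta$, as claimed. There is no real obstacle here: the only things to double-check are that $H_\text{out}\succeq 0$ (so that dropping the penalty term only decreases the energy) and that $\Pi_\text{hist}$ indeed commutes with $H_0$, both of which are immediate from the definitions.
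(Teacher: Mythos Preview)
Your proof is correct and follows essentially the same approach as the paper: drop the positive semidefinite penalty term $\epsilon_{\textup{penalty}}H_{\text{out}}$ to reduce to $H_0$, then use that the history states span the null space of $H_0$ together with the spectral gap to lower bound the energy by $\Delta(1-\|\Pi_{\text{hist}}\ket{\Psi}\|_2^2)$. The paper phrases the second step via the spectral decomposition $H_0=H_0^0+H_0^{\geq\Delta}$ rather than via commutativity of $\Pi_{\text{hist}}$ with $H_0$, but these are the same argument.
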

\begin{proof}
Let $\{ \ket{\psi_i} \}$ be the eigenbasis of $H_0$, which consists of history states (spanning the null space of $H$) and non-history states (with energy at least $\Delta$). We can write $H_0 = H_0^0 + H_0^{\geq \Delta}$, where $H_0^0$ are all the terms in the spectral decomposition of $H$ with eigenvalues exactly zero and $H_0^{\geq \Delta}$ those with eigenvalues $\geq \Delta$. Since $H_\text{out}$ is PSD and $\epsilon_{\textup{penalty}} > 0$, we have
\begin{align*}
    \delta &\geq \bra{\Psi} H_\text{FK}^{x} \ket{\Psi} \\
    &\geq \bra{\Psi} H_0 \ket{\Psi}\\
    &= \bra{\Psi} H_0^0  \ket{\Psi} + \bra{\Psi} H_0^{\geq \Delta} \ket{\Psi} \\
    &= 0 + \bra{\Psi}  \sum_{i : \lambda_i \geq \Delta} \lambda_i \ket{\psi_i} \bra{\psi_i}   \ket{\Psi}\\
    & \geq \Delta \bra{\Psi}  \sum_{i : \lambda_i \geq \Delta}  \ket{\psi_i} \bra{\psi_i}   \ket{\Psi}\\
    &= \Delta \bra{\Psi}  (\mathbb{I} - \Pi_\textup{hist})   \ket{\Psi}\\
    &= \Delta \left(1-\norm{\Pi_\textup{hist} \ket{\Psi}}^2_2\right).
\end{align*}
Where we used that the history states span the ground state in $H_0$. The statement follows directly by rearranging the inequality.
\end{proof}

Now that we understand that states with low energies must have a significant overlap with the space spanned by history states, we aim to precisely characterize the maximum acceptance probability of the witness in this history state, given the state's energy relative to the ground state energy of $H^x_{\textup{FK}}$. This is addressed in the following lemma.

\begin{lemma} Let $p^*$ be the maximum acceptance probability of a $\QMA$ verification circuit. Let $H_{\textup{FK}}^{x}$ be the Hamiltonian as in~\cref{eq:H_FK} resulting from the small-penalty clock construction for some ${\epsilon_\textup{penalty}} < \Delta/16$, with ground state energy $ \lambda_0 ({\epsilon_\textup{penalty}})$. Suppose we are given a state $\ket{\Psi}$ with an energy at most 
\begin{align*}
     \lambda_0 ({\epsilon_\textup{penalty}}) + c_0 \frac{\epsilon^2_{\textup{penalty}}}{\Delta}.
\end{align*}
Then we have that $\ket{\Psi}$ has fidelity at least 
\begin{align}
    1-\left( \frac{\epsilon_\textup{penalty}}{\Delta}\frac{1-p^*}{T+1} + 2c_0 \left(\frac{\epsilon_\textup{penalty}}{\Delta}\right)^2 \right)
    \label{eq:thm_assumption}
\end{align}
 with a history state $\ket{\eta(\psi)}$ for some witness $\ket{\psi}$ which has an acceptance probability $\tilde{p}$ satisfying
\begin{align*}
     p^*-\tilde{p} &\leq  (T+1) 2 c_0 \frac{\epsilon_{\textup{penalty}}}{\Delta} + 2 (T+1) \sqrt{ \frac{\epsilon_\textup{penalty}}{\Delta}\frac{1-p^*}{T+1} + 2c_0 \left(\frac{\epsilon_\textup{penalty}}{\Delta}\right)^2 } \\
     &\quad  + \frac{\epsilon_\textup{penalty}}{\Delta}\frac{1-p^*}{T+1} + 2c_0 \left(\frac{\epsilon_\textup{penalty}}{\Delta}\right)^2.
\end{align*}
\label{lem:hist_state_overlap}
\end{lemma}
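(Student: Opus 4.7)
The strategy is to first locate the ground state energy via the small-penalty clock construction, then apply the history-state overlap bound to identify a specific history state close to $\ket{\Psi}$, and finally expand the energy of $\ket{\Psi}$ in the resulting decomposition to extract a bound on the acceptance probability of the identified witness.

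First I would invoke \cref{lem:spcc} applied to the smallest eigenvalue in the low-energy subspace: a trial state $\ket{\eta(\psi^*)}$ for the optimal witness $\psi^*$ places $\lambda_0 \leq \epsilon_\textup{penalty}(1-p^*)/(T+1) \leq \epsilon_\textup{penalty}$, so the ground state lies in $\mS_{\epsilon_\textup{penalty}}$ and therefore
\[
\lambda_0(\epsilon_\textup{penalty}) \leq \frac{\epsilon_\textup{penalty}(1-p^*)}{T+1} + \frac{c_0 \epsilon_\textup{penalty}^2}{\Delta}.
\]
Combined with the assumed energy bound on $\ket{\Psi}$, this gives $\bra{\Psi}H_\textup{FK}^x\ket{\Psi} \leq \delta$ with $\delta := \epsilon_\textup{penalty}(1-p^*)/(T+1) + 2c_0\epsilon_\textup{penalty}^2/\Delta$. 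Applying \cref{lem:overlap_hist} with this $\delta$, the overlap of $\ket{\Psi}$ with the history subspace is at least $1 - \delta/\Delta$, which is exactly $1 - \tilde{\eta}$ for $\tilde{\eta}$ as in \cref{eq:thm_assumption}.

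Since any linear combination of history states is again a history state (as observed after \cref{eq:hist_state}), the normalized projection of $\ket{\Psi}$ onto the history subspace is itself of the form $\ket{\eta(\psi)}$ for some $\ket{\psi}$. Writing $\ket{\Psi} = \alpha\ket{\eta(\psi)} + \beta\ket{\eta^\perp}$ with $\ket{\eta^\perp}$ orthogonal to the history subspace, the fidelity $|\alpha|^2 \geq 1-\tilde{\eta}$ immediately gives the first assertion. For the second, I would expand
\begin{align*}
\bra{\Psi}H_\textup{FK}^x\ket{\Psi} = |\alpha|^2 E_\textup{hist} + |\beta|^2 E_\perp + 2\,\textup{Re}(\alpha^*\beta X),
\end{align*}
where $E_\textup{hist} = \epsilon_\textup{penalty}(1-\tilde{p})/(T+1)$ by \cref{eq:energy_hs}, $E_\perp \geq 0$, and $X := \bra{\eta(\psi)}H_\textup{FK}^x\ket{\eta^\perp} = \epsilon_\textup{penalty}\bra{\eta(\psi)}H_\textup{out}\ket{\eta^\perp}$ because $H_0\ket{\eta(\psi)} = 0$. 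Using $|X| \leq \epsilon_\textup{penalty}$ (since $\|H_\textup{out}\| \leq 1$) and dropping the non-negative $|\beta|^2 E_\perp$ term, I rearrange to get $E_\textup{hist} \leq (E + 2|\alpha||\beta|\epsilon_\textup{penalty})/|\alpha|^2$. Multiplying through by $(T+1)/\epsilon_\textup{penalty}$, substituting $E \leq \epsilon_\textup{penalty}(1-p^*)/(T+1) + 2c_0\epsilon_\textup{penalty}^2/\Delta$, bounding $|\beta| \leq \sqrt{\tilde{\eta}}$ and $1/|\alpha|^2 \leq 1/(1-\tilde{\eta})$, and finally subtracting $(1-p^*)$ yields the three-term bound on $p^* - \tilde{p}$.

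\textbf{Main obstacle.} The delicate step is handling the $1/|\alpha|^2$ factor without incurring multiplicative blow-up: one has to write $1/|\alpha|^2 = 1 + |\beta|^2/|\alpha|^2$ and route the extra $O(\tilde{\eta})$ correction into the additive $\tilde{\eta}$ piece of the bound rather than multiplying the leading $(1-p^*)$ term. This works cleanly because $\epsilon_\textup{penalty} < \Delta/16$ makes $\tilde{\eta}$ small, but verifying that each of the three terms in the claimed inequality appears with the stated constants (and that the $(1-p^*)\tilde{\eta}$ correction can be upper bounded by the bare $\tilde{\eta}$ using $1-p^* \leq 1$) requires careful arithmetic bookkeeping.
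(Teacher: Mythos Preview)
Your approach matches the paper's almost step for step: the same use of \cref{lem:spcc} to locate $\lambda_0$ and hence bound $\bra{\Psi}H_{\textup{FK}}^x\ket{\Psi}\le\delta$, the same application of \cref{lem:overlap_hist}, the same decomposition $\ket{\Psi}=\alpha\ket{\eta(\psi)}+\beta\ket{\eta^\perp}$, and the same cross-term bound $|X|\le\epsilon_{\textup{penalty}}$ via $H_0\ket{\eta(\psi)}=0$ and $\|H_{\textup{out}}\|\le 1$.

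The one divergence is the final algebra you flag as the main obstacle. The paper never divides by $|\alpha|^2$. After reaching $\alpha^2(1-\tilde p)\le (1-p^*)+(T+1)2c_0\epsilon_{\textup{penalty}}/\Delta+2\alpha\sqrt{1-\alpha^2}(T+1)$, it simply rearranges to isolate $p^*-\alpha^2\tilde p$ on the left, obtaining $p^*-\alpha^2\tilde p\le (1-\alpha^2)+\cdots$, and then invokes the one-line monotonicity $p^*-\tilde p\le p^*-\alpha^2\tilde p$ (valid since $\alpha^2\le 1$ and $\tilde p\ge 0$). This yields the stated constants directly, with no $1/|\alpha|^2$ factor ever appearing; your division-then-bookkeeping route is also correct but leaves an extraneous $1/(1-\tilde\eta)$ multiplier that does not literally match the lemma as stated.
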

\begin{proof}
By~\cref{lem:spcc}, we have that the ground state energy of $H_\text{FK}^{x}$ satisfies
\begin{align*}
    \lambda_0 \in \left[{\epsilon_\textup{penalty}} \frac{1-p^*}{T+1} - c_0 \frac{\epsilon^2_{\textup{penalty}}}{\Delta}, {\epsilon_\textup{penalty}} \frac{1-p^*}{T+1} + c_0 \frac{\epsilon^2_{\textup{penalty}}}{\Delta} \right].
\end{align*}
Hence, we have that $\ket{\Psi}$ has an energy at most
\begin{align}
   \bra{\Psi} H_{\textup{FK}}^x \ket{\Psi} \leq {\epsilon_\textup{penalty}} \frac{1-p^*}{T+1} + 2c_0 \frac{\epsilon^2_{\textup{penalty}}}{\Delta}=:\delta.
   \label{eq:up_bound_Psi_energy}
\end{align}
Note the extra factor `$2$' incurred because of the theorem assumption (\cref{eq:thm_assumption}). We can write any state $\ket{\Psi}$ in the eigenbasis of $H_0$ as
\begin{align}
    \ket{\Psi} = \alpha \ket{\text{hist}} + \sqrt{1-\alpha^2} |\text{hist}^\perp \rangle,
    \label{eq:decomp_eigenbasis_H0}
\end{align}
for some real $\alpha \in [0,1]$, where $ \ket{\text{hist}}$ lives in the space spanned by the history states and $|\text{hist}^\perp \rangle$ in the space orthogonal to it. In its eigenbasis, $H_0$ is diagonal. Note that $\alpha^2 =  \norm{\Pi_\textup{hist} \ket{\Psi}}^2_2$. Hence, by~\cref{lem:overlap_hist} it must hold that 
\begin{align*}
    \alpha \geq \sqrt{1-\frac{\delta}{\Delta}} = \sqrt{1-\left( \frac{\epsilon_\textup{penalty}}{\Delta}\frac{1-p^*}{T+1} + 2c_0 \left(\frac{\epsilon_\textup{penalty}}{\Delta}\right)^2 \right)}.
\end{align*}
We expand the energy using the decomposition of $\ket{\Psi}$ in the eigenbasis of $H_0$ using~\cref{eq:decomp_eigenbasis_H0} as 
\begin{align*}
    \bra{\Psi}  H_{\textup{FK}}^x \ket{\Psi} &= \left(\alpha \bra{\text{hist}} + \sqrt{1-\alpha^2} \langle \text{hist}^\perp |\right) H_\text{FK}^{x} \left(\alpha \ket{\text{hist}} + \sqrt{1-\alpha^2} |\text{hist}^\perp \rangle\right)\\
    & = \alpha^2 \bra{\text{hist}} H_\text{FK}^{x} \ket{\text{hist}} +  \alpha \sqrt{1-\alpha^2} \bra{\text{hist}} H_\text{FK}^{x}  |\text{hist}^\perp \rangle \\
    &\qquad + \alpha \sqrt{1-\alpha^2}  \langle \text{hist}^\perp | H_\text{FK}^{x} \ket{\text{hist}} + (1-\alpha^2) \langle \text{hist}^\perp | H_\text{FK}^{x} |\text{hist}^\perp \rangle.
\end{align*}
We now want to find a lower bound on $\bra{\Psi}  H_{\textup{FK}}^x \ket{\Psi}$ in terms of $\bra{\text{hist}} H_\text{FK}^{x} \ket{\text{hist}}$ to compare with our upper bound in~\cref{eq:up_bound_Psi_energy}. To do this, we must find lower bounds on the other three terms in the expression. For the first one we have
\begin{align*}
     {\bra{\text{hist}} H_\text{FK}^{x}  |{\text{hist}^\perp}}\rangle =  {\bra{\text{hist}} H_0 + {\epsilon_\textup{penalty}} H_\text{out}  |\text{hist}^\perp \rangle}
     = {\epsilon_\textup{penalty}} {\bra{\text{hist}} H_\text{out}  |\text{hist}^\perp \rangle} 
     \geq -{\epsilon_\textup{penalty}},
\end{align*}
using that $\norm{H_\text{out}} \leq 1$ and that $ \bra{\text{hist}} H_0  |\text{hist}^\perp \rangle = 0$, which holds since $\ket{\text{hist}}$,$|\text{hist}^\perp \rangle$ live in separate eigenspaces of $H_0$.  Similarly, for the second term it must also hold that $\bra{\text{hist}^{\perp}} H  \ket{\text{hist}} \geq - {\epsilon_\textup{penalty}}$. And finally, for the third term we have {$
    \langle \text{hist}^\perp |  H_{\textup{FK}}^x |\text{hist}^\perp \rangle \geq \Delta \geq 0 $.}
Putting this all together, we have
\begin{align}
    \bra{\Psi} H_{\textup{FK}}^x \ket{\Psi}  &\geq \alpha^2 \bra{\text{hist}} _{\textup{FK}}^x \ket{\text{hist}}  - 2 \alpha \sqrt{1-\alpha^2} {\epsilon_\textup{penalty}},
    \label{eq:lb_HFK}
\end{align}
Suppose that $\ket{\text{hist}}$ encodes a witness with acceptance probability $\tilde{p}$ (recall that linear combinations of history states are also history states). We have that
\begin{align*}
    \bra{\text{hist}} H_{\textup{FK}}^x \ket{\text{hist}} = {\epsilon_\textup{penalty}} \frac{1-\tilde{p}}{T+1}. 
\end{align*}
Plugging this into~\cref{eq:lb_HFK} and combining the resulting expression with~\cref{eq:up_bound_Psi_energy} gives
\begin{align*}
    {\epsilon_\textup{penalty}} \frac{1-p^*}{T+1} + 2 c_0 \frac{\epsilon^2_{\textup{penalty}}}{\Delta} \geq \alpha^2 {\epsilon_\textup{penalty}} \frac{1-\tilde{p}}{T+1} - 2 \alpha \sqrt{1-\alpha^2} {\epsilon_\textup{penalty}}
\end{align*}
which after rearranging to get $p^*-\alpha^2\tilde{p}$ at the LHS of the inequality results in
\begin{align*}
     p^*-\alpha^2\tilde{p} \leq  (T+1) 2 c_0 \frac{\epsilon_{\textup{penalty}}}{\Delta}  + 2 \alpha (T+1)\sqrt{1-\alpha^2} + 1 - \alpha^2.
\end{align*}
which gives, using our bounds on $\alpha$ and the fact that $p^*-\alpha^2\tilde{p} \geq p^*-\tilde{p} $ as $p^* \geq \tilde{p} \geq 0 $ and $\alpha\in [0,1]$, as well as~\cref{lem:spec_gap_H0},
\begin{align*}
     p^*-\tilde{p} &\leq  (T+1) 2 c_0 \frac{\epsilon_{\textup{penalty}}}{\Delta} + 2 (T+1) \sqrt{ \frac{\epsilon_\textup{penalty}}{\Delta}\frac{1-p^*}{T+1} + 2c_0 \left(\frac{\epsilon_\textup{penalty}}{\Delta}\right)^2 } \nonumber\\
     &\quad  + \frac{\epsilon_\textup{penalty}}{\Delta}\frac{1-p^*}{T+1} + 2c_0 \left(\frac{\epsilon_\textup{penalty}}{\Delta}\right)^2,
\end{align*}
 which completes the proof.
\end{proof}

However, being close to a history state is insufficient for our purposes; we need to be close to an actual witness state $\ket{\psi}$ tensored with some other state we do not care about. We demonstrate that the standard technique of ``pre-idling'' the verification circuit~\cite{aharonov2008adiabatic, gharibian2021dequantizing, cade2023improved} ensures that all history states are close to a state of the form $\ket{\psi} \otimes \ket{\Phi}$, where $\ket{\Phi}$ is a known state.

\begin{lemma} Let $U_n = U_{n,T} \dots U_{n,1}$ be a $\QMA$ verification circuit that uses $T$ gates. Let $\tilde{U}_n =  \bar{U}_{n,T+M} \dots \bar{U}_{n,1}$ be the circuit which is as $U_n$ but with $M$ identities prepended to the circuit and $H_\text{FK}$ be the circuit-to-Hamiltonian mapping resulting from $\tilde{U}_n$. Then for any history state $\ket{\eta(\psi)}$ with witness $\ket{\psi}$, we have that there exists a state of the form $\ket{\psi} \otimes \ket{\Phi}$, where $\ket{\Phi}$ is known, which satisfies $
    | \bra{\eta(\psi) } \left(\ket{\psi} \otimes \ket{\Phi} \right)|^2 =  M/(M + T + 1)$.
\label{lem:pre_idling}
\end{lemma}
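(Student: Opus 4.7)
The plan is to explicitly identify the portion of $\ket{\eta(\psi)}$ in which the witness register is untouched, and then pick $\ket{\Phi}$ to be the normalization of that portion on the remaining registers.

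First I would expand $\ket{\eta(\psi)}$ using the history-state formula of~\cref{eq:hist_state} applied to $\tilde U_n$, which now has $T+M$ gates and clock register ranging over $t\in\{0,1,\dots,T+M\}$. Because $\bar U_{n,1},\dots,\bar U_{n,M}$ are all identities, for every clock value $t$ inside the pre-idle window $S_{\text{idle}}$ the evolved state is literally $\ket{\psi}\otimes\ket{0}_{\text{anc}}$, independently of $\ket{\psi}$. Letting $M'=|S_{\text{idle}}|$ (this is the step where one must fix the counting convention; with the paper's convention $M'=M$), we can write
\begin{align*}
\ket{\eta(\psi)} \;=\; \frac{1}{\sqrt{T+M+1}}\!\left[\sum_{t\in S_{\text{idle}}}\ket{\psi}\ket{0}_{\text{anc}}\ket{t}_{\text{clock}} + \sum_{t\notin S_{\text{idle}}}\bar U_{t}\cdots\bar U_{1}\ket{\psi}\ket{0}_{\text{anc}}\ket{t}_{\text{clock}}\right]\!.
\end{align*}

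Next, I would define
\[
\ket{\Phi}\;:=\;\ket{0}_{\text{anc}}\otimes\frac{1}{\sqrt{M'}}\sum_{t\in S_{\text{idle}}}\ket{t}_{\text{clock}},
\]
which is a known state because $S_{\text{idle}}$ depends only on $M$, not on the unknown witness $\ket{\psi}$. Then I would compute the overlap $\bra{\eta(\psi)}\bigl(\ket{\psi}\otimes\ket{\Phi}\bigr)$. Clock-basis orthogonality immediately kills every term with $t\notin S_{\text{idle}}$, and on each surviving term the overlap on the witness register is $\braket{\psi}{\psi}=1$ and on the ancilla is $\braket{0}{0}=1$. Counting the $M'$ surviving terms against the combined normalization $1/\sqrt{(T+M+1)M'}$ gives
\[
\bra{\eta(\psi)}\bigl(\ket{\psi}\otimes\ket{\Phi}\bigr)\;=\;\frac{M'}{\sqrt{(T+M+1)M'}}\;=\;\sqrt{\frac{M'}{T+M+1}},
\]
so the squared magnitude is $M'/(T+M+1)$, matching the claim.

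The only real subtlety, and what I would treat as the main (albeit small) obstacle, is the counting convention for $S_{\text{idle}}$: depending on whether the clock indexing begins at $0$ or $1$ and whether a clock value records the state \emph{before} or \emph{after} gate application, one gets $M'\in\{M,M+1\}$. With the convention used implicitly in the statement, $M'=M$, and the arithmetic above is immediate; under any reasonable convention the resulting fidelity is at least $M/(M+T+1)$, which is all that the downstream application of this lemma requires.
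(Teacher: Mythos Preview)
Your proposal is correct and is essentially the same argument as the paper's: the paper simply writes down $\ket{\Phi}=\frac{1}{\sqrt{M}}\sum_{t=0}^{M-1}\ket{0\dots 0}\ket{t}$ and states that a direct calculation gives the claimed overlap, which is exactly the computation you spell out via clock-register orthogonality. Your remark about the counting convention is well taken; the paper's specific choice of summing over $t=0,\dots,M-1$ is what yields the exact equality $M/(M+T+1)$.
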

\begin{proof}
Consider the state $\ket{\psi} \otimes \ket{\Phi} $ with $
    \ket{\Phi} = \frac{1}{\sqrt{M}} \sum_{t = 0}^{M-1}  \ket{0 \dots 0}   \ket{t}$.
We have that the first $M$ gates $\bar{U}_t$ are all identities. A direct calculation shows
$
    | \bra{\eta(\psi) } \left(\ket{\psi} \otimes \ket{\Phi} \right)|^2 =  M/(M + T + 1)$.
\end{proof}
We are now prepared to integrate all the above and present our approximately witness-preserving reduction. This reduction enables us to approximate the highest-accepting witness by solving a local Hamiltonian problem.

\begin{theorem} 
Let $A$ be a promise problem in $\QMA$ and $x$, $|x|=n$, an input, with a $\QMA$ verification circuit $U_n$ using $T$ gates and has a witness register denoted by $W$. Suppose that $p^*$ is the maximum acceptance probability for $x$. Let $p_1(n),p_2(n)$ be any polynomially bounded functions that are $\geq 1$ for all $n \geq 1$ and set
\begin{align*}
         M : = \left(4 p_2(n)\right)^2 (T+1),  \quad {\epsilon_\textup{penalty}}:= \frac{1}{100 (c_0 + 1)(\tilde{T}+1)^4   \left(p_1(n) \cdot p_2 (n)\right)^2},
\end{align*}
where $\tilde{T} = M + T$. Then there exists a polynomial-time reduction from a $M$-pre-idled verification circuit $\tilde{U}_n$ with $\tilde{T} = M +T$ gates, to a local Hamiltonian $H$ such that for any state with $\ket{\Psi}$ that satisfies 
    \begin{align*}
         \bra{\Psi} H \ket{\Psi}\leq \lambda_0({\epsilon_\textup{penalty}}) + c_0 \epsilon^2_{\textup{penalty}} \tilde{T}^2 
    \end{align*}
 it holds that $
        \norm{\tr_{\overline{W}} \ketbra{\Psi} - \ketbra{\psi}}_1 \leq 1/2 p_2(n)$
    for some quantum witness $\ket{\psi}$, which satisfies the property that $U_n$ accepts $(x,\ket{\psi)}$ with probability at least $
        p^*-1/p_1(n)$.
\label{thm:approx_wit}
\end{theorem}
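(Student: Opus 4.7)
The plan is to assemble three ingredients already developed in this section: the small-penalty clock construction (\cref{lem:spcc}), the low-energy-to-history-state overlap (\cref{lem:hist_state_overlap}), and the pre-idling overlap (\cref{lem:pre_idling}). The reduction itself is explicit: apply the $3$-local Kempe--Kitaev--Regev circuit-to-Hamiltonian mapping of~\cref{eq:H_FK} to the pre-idled circuit $\tilde{U}_n$ with $\tilde{T}=M+T$ gates, using the prescribed $\epsilon_\textup{penalty}$. Since $\tilde{U}_n$ has $\tilde{T}$ gates, \cref{lem:spec_gap_H0} yields $\Delta \geq 1/(\tilde{T}+1)^2$, and a direct check (using $c_0, p_1, p_2 \geq 1$) gives $\epsilon_\textup{penalty} \leq \Delta/16$, so that \cref{lem:spcc} applies.

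Next, I would apply \cref{lem:hist_state_overlap} to the given state $\ket{\Psi}$. Because $\tilde{T}^2 \leq (\tilde{T}+1)^2 \leq 1/\Delta$, the hypothesis $\bra{\Psi} H \ket{\Psi} \leq \lambda_0(\epsilon_\textup{penalty}) + c_0\epsilon_\textup{penalty}^2 \tilde{T}^2$ implies the energy is at most $\lambda_0(\epsilon_\textup{penalty}) + c_0\epsilon_\textup{penalty}^2/\Delta$, i.e. the premise of \cref{lem:hist_state_overlap}. This yields, for some history state $\ket{\eta(\psi)}$,
\[
  |\langle \Psi | \eta(\psi)\rangle|^2 \geq 1 - \delta_1, \qquad \delta_1 := \frac{\epsilon_\textup{penalty}}{\Delta}\frac{1-p^*}{\tilde{T}+1} + 2c_0\left(\frac{\epsilon_\textup{penalty}}{\Delta}\right)^2,
\]
together with the companion bound on $p^*-\tilde{p}$ (where $\tilde{p}$ is the acceptance probability of the extracted $\ket{\psi}$). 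Next, \cref{lem:pre_idling} guarantees a known state $\ket{\Phi}$ with $|\langle \eta(\psi) | \psi\otimes\Phi\rangle|^2 = M/(M+T+1) = 16p_2^2/(16p_2^2+1)$, so the corresponding infidelity is $\delta_2 := 1/(16p_2^2+1)$. Using the Fuchs--van de Graaf identity for pure states and the triangle inequality,
\[
  \norm{\ketbra{\Psi} - \ket{\psi\otimes\Phi}\!\bra{\psi\otimes\Phi}}_1 \leq 2\sqrt{\delta_1} + 2\sqrt{\delta_2},
\]
and then the contractivity of the trace norm under the partial trace over $\overline{W}$, combined with $\tr_{\overline{W}}\ketbra{\psi\otimes\Phi} = \ketbra{\psi}$, gives
\[
  \norm{\tr_{\overline{W}}\ketbra{\Psi} - \ketbra{\psi}}_1 \leq 2\sqrt{\delta_1} + 2\sqrt{\delta_2}.
\]

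The main obstacle, and the only nontrivial step, is the parameter bookkeeping showing that the stated choices of $M$ and $\epsilon_\textup{penalty}$ make both quantities come out right. For the trace distance: since $\sqrt{16p_2^2+1} > 4p_2$, we have $2\sqrt{\delta_2} = 2/\sqrt{16p_2^2+1} < 1/(2p_2(n))$, with a strictly positive slack $s(p_2) := 1/(2p_2) - 2/\sqrt{16p_2^2+1}$; the denominator $100(c_0+1)(\tilde{T}+1)^4(p_1 p_2)^2$ in $\epsilon_\textup{penalty}$ is large enough so that $\epsilon_\textup{penalty}/\Delta \leq 1/(100(c_0+1)(\tilde{T}+1)^2(p_1p_2)^2)$, whence $\delta_1$ is $O\!\left(1/\bigl((\tilde{T}+1)^3(p_1p_2)^2\bigr)\right)$ and $2\sqrt{\delta_1}$ fits comfortably inside $s(p_2)$. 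For the acceptance probability gap, substituting the same estimate into the bound of \cref{lem:hist_state_overlap} (with $T$ replaced by $\tilde{T}$) shows that each of the three terms is $O(1/(p_1(n)p_2(n)))$, so their sum is bounded by $1/p_1(n)$. Verifying these two inequalities -- both a straightforward but tedious substitution -- is the content of the proof; all other steps are direct applications of the lemmas above.
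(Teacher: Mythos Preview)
Your proposal is correct and follows essentially the same route as the paper: apply \cref{lem:pre_idling}, \cref{lem:spec_gap_H0}, and \cref{lem:hist_state_overlap} to the pre-idled circuit, then combine via the triangle inequality and contractivity of the partial trace. The only point worth flagging is that your slack argument for $2\sqrt{\delta_1}\leq s(p_2)$ tacitly relies on $\tilde{T}+1\geq M \geq 16p_2(n)^2$ (so that $(\tilde T+1)^{3/2}\gtrsim p_2^3$ absorbs the $1/p_2^3$ scaling of $s(p_2)$); this is indeed what makes the substitution go through, so you may want to make that dependence explicit when you write out the ``tedious'' step.
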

\begin{proof}
By~\cref{lem:pre_idling}, we can use pre-idling with $M$ gates, creating a new circuit $\tilde{U}_n$ with $\tilde{T} = M + T$ gates such that
\begin{align*}
      \norm{\ketbra{\eta(\psi)} -\ket{\psi}\ketbra{\Phi}{\psi}\bra{\Phi} }_1  &  = \sqrt{1-| \bra{\eta(\psi) } \left(\ket{\psi} \otimes \ket{\Phi} \right)|^2}\\
      &= \sqrt{1-\frac{M}{M + T + 1}}\\
      &\leq 1/4p_2(n)
\end{align*}
if $
  M \geq \left(4 p_2(n)-1\right)(T+1)$,
which is satisfied with our choice of $M$. The statement in the theorem then consequently holds since the trace distance can only decrease under taking the partial trace (taken over the non-witness registers). Hence, we can take $\tilde{T} = T + M = \poly(n)$ in the new circuit. By~\cref{lem:spec_gap_H0}, we have that the spectral gap $\Delta$ for our $H_0$ corresponding to the new circuit $\bar{U}$ satisfies  $\Delta \geq 1/(\tilde{T}+1)^2$. By~\cref{lem:hist_state_overlap} we have that for our choice of ${\epsilon_\textup{penalty}}$ that if we are given a state $\ket{\Psi}$ with energy at most  $
\lambda_0 ({\epsilon_\textup{penalty}}) + 2c_0 \epsilon^2_{\textup{penalty}} (\tilde{T}+1)^2$
then it has trace distance at least  
\begin{align*}
    \norm{\ketbra{\Psi} - \ketbra{\eta(\psi)}}_1 &=\sqrt{1-|\bra{\Psi}\ket{\eta(\psi)}|^2  }\\
    &\leq \sqrt{
     \frac{\epsilon_\textup{penalty}}{\Delta}\frac{1-p^*}{\tilde{T}+1} + 2c_0 \left(\frac{\epsilon_\textup{penalty}}{\Delta}\right)^2}\\
    &\leq   1/4 p_2 (n),
\end{align*}
 with a history state $\ket{\eta(\psi)}$ for some witness $\ket{\psi}$ with acceptance probability $\tilde{p}$ which satisfies
\begin{align*}
     p^*-\tilde{p} &\leq  (\tilde{T}+1) 2 c_0 \frac{\epsilon_{\textup{penalty}}}{\Delta} + 2 (\tilde{T}+1) \sqrt{ \frac{\epsilon_\textup{penalty}}{\Delta}\frac{1-p^*}{\tilde{T}+1} + 2c_0 \left(\frac{\epsilon_\textup{penalty}}{\Delta}\right)^2 } \\
     &\quad  + \frac{\epsilon_\textup{penalty}}{\Delta}\frac{1-p^*}{\tilde{T}+1} + 2c_0 \left(\frac{\epsilon_\textup{penalty}}{\Delta}\right)^2\\
     &\leq 1/p_1(n)
\end{align*}
as desired. Hence, we have by the triangle inequality
\begin{align*}
    \norm{\ketbra{\Psi} - \ket{\psi}\mathclose{\ketbra{\Phi}{\psi}}\mathclose{\bra{\Phi}}  }_1 &\leq \norm{\ketbra{\Psi} - \ketbra{\eta(\psi)}}_1 + \norm{\ketbra{\eta(\psi)} -\ket{\psi}\mathclose{\ketbra{\Phi}{\psi}}\mathclose{\bra{\Phi}} }_1 \\
    &\leq 1/2 p_2(n).
\end{align*}
The result directly follows since the trace distance can only be reduced by taking a partial trace.
\end{proof}
Note that in the above theorem we have left the dependence on $\epsilon_\textup{penalty}$ explicitly in the energy bound, since we do not know beforehand what $\lambda_0({\epsilon_\textup{penalty}})$ is going to be (even if we have set $\epsilon_\textup{penalty}$) as it depends on the maximum acceptance probability $p^*$. However, in the next section we will see that this is fine as we can estimate the ground state energy with $\QMA$ oracle access as shown in~\cref{sec:marginals_LH}.

Finally, we show that~\cref{thm:approx_wit} also holds in a mixed state setting, which will be important as~\cref{alg1} only returns density matrices that are promised to be approximately consistent with a global density matrix (and not a pure state). 
\begin{corollary} Under the same assumptions and parameter choices as~\cref{thm:approx_wit}, replacing $\ket{\Psi}$ with a mixed state $\xi$ such that 
    \begin{align*}
          \tr[H \xi ]\leq \lambda_0({\epsilon_\textup{penalty}}) + c_0 \epsilon^2_{\textup{penalty}} \tilde{T}^2,
    \end{align*}
 it holds that $
    \norm{\tr_{\overline{W}} [\xi] - \xi_{\textup{proof}}}_1 \leq 1/2 p_2(n)$
    for some quantum witness $\xi_{\textup{proof}}$, which satisfies the property that $U_n$ accepts $(x,\xi_{\textup{proof}})$ with probability at least $
        p^*-1/p_1(n)$.
        \label{cor:thm_red_mixed_state}
\end{corollary}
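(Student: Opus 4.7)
The plan is to reduce the mixed-state case to the pure-state analysis of~\cref{thm:approx_wit} by decomposing $\xi$ into pure components and aggregating the resulting bounds via linearity and Jensen's inequality. Spectrally decompose $\xi = \sum_i p_i \ketbra{\Psi_i}$ and set $e_i := \bra{\Psi_i} H \ket{\Psi_i}$; linearity of the trace then gives $\sum_i p_i e_i = \tr[H\xi] \leq \lambda_0(\epsilon_{\textup{penalty}}) + c_0 \epsilon^2_{\textup{penalty}} \tilde{T}^2$. Individual $e_i$ may well exceed this threshold, but this will be absorbed by Jensen at the end.

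The key observation is that every intermediate estimate in the proofs of~\cref{lem:overlap_hist,lem:hist_state_overlap,lem:pre_idling} is really a bound parameterized by the energy of the input state; the pure-state theorem just plugs in the promised threshold at the very end. Thus for each $\ket{\Psi_i}$ individually I would take $\ket{\psi_i}$ to be the witness encoded by the renormalized projection $\Pi_{\textup{hist}}\ket{\Psi_i}/\norm{\Pi_{\textup{hist}}\ket{\Psi_i}}$ onto the history-state subspace (choosing $\ket{\psi_i}$ arbitrarily if this projection vanishes). The argument of~\cref{lem:overlap_hist} yields $1 - \alpha_i^2 \leq e_i/\Delta$ with $\alpha_i = |\langle\Psi_i|\eta(\psi_i)\rangle|$, which gives $\norm{\ketbra{\Psi_i}-\ketbra{\eta(\psi_i)}}_1 \leq 2\sqrt{e_i/\Delta}$. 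Re-running the final arithmetic in the proof of~\cref{lem:hist_state_overlap} with the substitution $\delta \mapsto e_i$ gives
\begin{align*}
    p^* - P(\ketbra{\psi_i}) \leq e_i/\Delta + 2c_0\epsilon_{\textup{penalty}}(\tilde{T}+1)/\Delta + 2(\tilde{T}+1)\sqrt{e_i/\Delta},
\end{align*}
and~\cref{lem:pre_idling} combined with the triangle inequality adds the familiar pre-idling error to the trace-distance bound between $\ketbra{\Psi_i}$ and $\ketbra{\psi_i}\otimes\ketbra{\Phi}$.

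Defining the candidate mixed witness $\xi_{\textup{proof}} := \sum_i p_i \ketbra{\psi_i}$, convex subadditivity of $\norm{\cdot}_1$, monotonicity of $\norm{\cdot}_1$ under partial trace, linearity of the verifier's acceptance probability (so $P(\xi_{\textup{proof}}) = \sum_i p_i P(\ketbra{\psi_i})$), and Jensen's inequality applied to the concave map $t \mapsto \sqrt{t}$ together yield
\begin{align*}
    \norm{\tr_{\overline{W}}[\xi] - \xi_{\textup{proof}}}_1 &\leq 2\sqrt{\tr[H\xi]/\Delta} + 1/(4p_2(n)),\\
    p^* - P(\xi_{\textup{proof}}) &\leq \tr[H\xi]/\Delta + 2c_0\epsilon_{\textup{penalty}}(\tilde{T}+1)/\Delta + 2(\tilde{T}+1)\sqrt{\tr[H\xi]/\Delta}.
\end{align*}
Both right-hand sides have exactly the form of the bounds that appear in the pure-state proof of~\cref{thm:approx_wit} with the pure-state energy $\delta$ replaced by $\tr[H\xi]$; since $\tr[H\xi]$ satisfies the very same hypothesis as that $\delta$, the parameter choices from~\cref{thm:approx_wit} immediately deliver $\norm{\tr_{\overline{W}}[\xi] - \xi_{\textup{proof}}}_1 \leq 1/(2p_2(n))$ and $p^* - P(\xi_{\textup{proof}}) \leq 1/p_1(n)$, as required.

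The only subtle point is handling components with $e_i > \Delta$, where the per-component overlap estimate becomes vacuous; this is not a genuine obstacle because the trivial bound $\norm{\cdot}_1 \leq 2$ is itself dominated by $2\sqrt{e_i/\Delta}$ as soon as $e_i \geq \Delta$, so Jensen continues to yield the correct averaged bound without needing a separate case analysis.
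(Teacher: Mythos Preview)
Your convex-decomposition-plus-Jensen strategy is genuinely different from the paper's proof, which instead \emph{purifies} $\xi$: one appends an ancilla register $W'$, considers the extended verifier $U_{\textup{ext}}=U_n\otimes\mathbb{I}$ whose clock Hamiltonian is $H_{\textup{ext}}=H\otimes\mathbb{I}$, lifts $\xi$ to a pure state $\ket{\Phi}$ with the same energy, applies~\cref{thm:approx_wit} verbatim to $\ket{\Phi}$, and then traces the ancilla back out. This sidesteps any per-component analysis entirely.

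Your route works for the trace-distance bound (Jensen on $t\mapsto\sqrt t$ is exactly the right tool there), but there is a real gap in the acceptance-probability step. The substitution ``$\delta\mapsto e_i$'' in the proof of~\cref{lem:hist_state_overlap} does \emph{not} yield your claimed per-component inequality: in that proof the appearance of $p^*$ on the left-hand side comes precisely from expanding the specific value $\delta=\epsilon_{\textup{penalty}}(1-p^*)/(\tilde T+1)+2c_0\epsilon_{\textup{penalty}}^2/\Delta$, so replacing $\delta$ by an arbitrary $e_i$ destroys that cancellation. A concrete counterexample: take $\ket{\Psi_i}=\ket{\eta(\psi_{\textup{bad}})}$ to be itself a history state for a witness with $P(\psi_{\textup{bad}})=0$. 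Then $\alpha_i=1$, $e_i=\epsilon_{\textup{penalty}}/(\tilde T+1)$, and $p^*-P(\ketbra{\psi_i})=p^*$; but for the parameter choice of~\cref{thm:approx_wit} your right-hand side $e_i/\Delta+2c_0\epsilon_{\textup{penalty}}(\tilde T+1)/\Delta+2(\tilde T+1)\sqrt{e_i/\Delta}$ is $o(1)$, so the inequality fails whenever $p^*$ is bounded away from~$0$.

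Such bad components necessarily carry small weight in $\xi$ (their energy exceeds $\lambda_0$ by $\Theta(\epsilon_{\textup{penalty}}p^*/(\tilde T+1))$, so Markov bounds their total mass), and hence the approach is repairable---for instance by deriving a per-component bound in terms of $e_i-\lambda_0$ rather than $e_i$ before averaging---but this is materially more work than what you wrote and does not reduce to ``substitute $\delta\mapsto e_i$''. The purification trick is both shorter and avoids the issue altogether.
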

\begin{proof} For this proof, we will omit all super- and subscripts for the verification circuit $U_n$ and corresponding circuit-to-Hamiltonian mapping $H_{\textup{FK}}^x
$ (but they will be the same object as before). We assume that the verification circuit $U$ is already pre-idled as per~\cref{thm:approx_wit}. For this $U$, denote the $p(n)$-qubit proof register again as $W$. Suppose that the corresponding circuit-to-Hamiltonian mapping $H$ as per~\cref{thm:approx_wit} acts on $q(n)$-qubits. We consider another verification circuit $U_{\textup{ext}} = U \otimes \mathbb{I}$ with proof register $W_{\textup{ext}} = W \cup W'$, where $\mathbb{I}$ acts on an appended register $W'$ consisting of $q(n)$ qubits. It is easy to see that the corresponding circuit-to-Hamiltonian mapping of $U_{\textup{ext}}$ is of the form $H_{\textup{ext}} = H \otimes \mathbb{I}$, where $H$ is the circuit-to-Hamiltonian mapping from $U$ and $\mathbb{I}$ again acts on $q(n)$ qubits, which means that $H_{\textup{ext}}$ acts on $2q(n)$ qubits. Now suppose there exists a $q(n)$-qubit mixed state $\xi$ such that 
$\tr[H \xi ]\leq \lambda_0({\epsilon_\textup{penalty}}) + c_0 \epsilon^2_{\textup{penalty}} \tilde{T}^2 $. Then, there exists a $2q(n)$ purification $\ket{\Phi}$, $\tr_{W'}[\ketbra{\Phi} = \xi]$, such that 
\[
\tr[H_{\textup{ext}} \ketbra{\Phi} ]\leq \lambda_0({\epsilon_\textup{penalty}}) + c_0 \epsilon^2_{\textup{penalty}} \tilde{T}^2.
\]
 Moreover, as $H_{\textup{ext}}$ can be viewed as the circuit-to-Hamiltonian mapping from $U_{\textup{ext}}$,~\cref{thm:approx_wit} readily implies that there exists a $p(n)+q(n)$-qubit proof $\ket{\Psi}$ such that
 \[
\norm{\tr_{\overline{W_{\textup{ext}}}} [\ketbra{\Phi}] - \ketbra{\Psi}}_1 \leq 1/2 p_2(n), 
 \]
and $U_\textup{ext}$ accepts $(x,\ketbra{\Psi})$ with probability at least $p^*$. Taking the partial trace first over $\overline{W_{\textup{ext}}}$ and then over $\bar{W}'\setminus \overline{W_{\textup{ext}}}$, we end up with a state in the register $W$ again. Since $\ket{\Phi} $ is a purification of $\xi$, we have $\tr_{\overline{W}}[\xi] = \tr_{\overline{W}} [\ketbra{\Phi}]$. Since $\overline{W} \subset \overline{W_{\textup{ext}}}$, and the trace distance can only decrease under the partial trace, we have
\begin{align*}
    1/2 p_2(n) \geq  \norm{\tr_{\overline{W}} [\ketbra{\Phi}] -\tr_{\overline{W}}[ \ketbra{\Psi}]}_1 = \norm{\tr_{\overline{W}} [\xi] -\xi_{\textup{proof}}}_1.
\end{align*}
Here $\tr_{\overline{W}}[ \ketbra{\Psi}] =: \xi_{\textup{proof}}$ is a $p(n)$-qubit proof that $U$ accepts with probability at least $p^*$.
\end{proof}

\subsection{Finding marginals of  high-accepting $\QMA$ witnesses}
Finally, we can now combine the above ideas to show that for any problem in $\QMA$ the density matrices for a nearly optimal accepting witness can be obtained. We let $J$ be the set of all $q$-element subsets of the indices of the qubits on which $H_\textup{FK}^{x}$ is defined (which is not to be confused with the set $I$, which depends on $r$, i.e., the maximum of $q$ and $k$), and $J_W \subset J$ the set of all $q$-element index combinations of indices corresponding to the proof register. After we pre-idle the circuit $U_n$ and construct the corresponding $H_{\textup{FK}}^x$ for the some choice of $\epsilon_{\textup{penalty}}$, we simply run the~\cref{alg1} (randomized or derandomized) for $H_{\textup{FK}}^x$ to obtain all density matrices with indices from the set $J$ and finally keep only those with indices from $J_W$. The full algorithm is given in~\cref{alg2}.

\

\begin{custalgo}[$\QMA$ query algorithm to find approximations of the $q$-local density matrices of high-accepting witnesses.]
\noindent \textbf{Input:} $U_n$, $p_1$, $p_2$, $q$.\\

\noindent \textbf{Set:} $M$, $\tilde{T}$ and $\epsilon_\textup{penalty}$  as per~\cref{thm:approx_wit}, and set $a := c_0 (\tilde{T}+1)^2\epsilon^2_{\textup{penalty}} $, $\epsilon :=1/2p_2(n)$.\\

\noindent \textbf{Algorithm:} 
\begin{enumerate}
    \item Let $\tilde{U}_n$ be the $M$-pre-idled circuit of $U_n$.
    \item Construct $H_{\textup{FK}}^x$ for the choice of ${\epsilon_\textup{penalty}}$ according to~\cref{eq:H_FK}. Let $J$ be the set of all $q$-element subsets of qubits on which $H_{\textup{FK}}^x$ is defined and let $J_W$ be only those concerning the witness register $W$ of $\tilde{U}_n$.
    \item Run~\cref{alg1} (randomized or derandomized) for $H_{\textup{FK}}^x$ with $a$, $\epsilon$ to obtain $\{\rho_{i_1,\dots,i_q}\}_{i_1,\dots,i_q \in J}$ and $\hat{\lambda}_0(H_{\textup{FK}}^x)$.
    \item Output $\{\rho_{i_1,\dots,i_q}\}_{i_1,\dots,i_q \in J_W}$ and
$        \hat{p} :=   1- \frac{\hat{\lambda}_0(H_{\textup{FK}}^x) (\tilde{T}+1)}{{\epsilon_\textup{penalty}}}$.
\end{enumerate}
  \label{alg2}
\end{custalgo}

\

\begin{theorem} Let $A = (A_\text{yes},A_\text{no})$ be any problem in $\QMA$ having a uniform family of verifier circuits $\{U_n\}$ and let $x$, $|x|=n$ be the input. Then for any polynomially bounded functions $p_1(n)$, $p_2(n)$ that are $\geq 1$ for all $n \geq 1$, and any $q \in \mO(1)$ there exists a polynomial-time (randomized) algorithm that makes queries to a $\QMA$ oracle which outputs (with probability $\geq 2/3$)
\begin{itemize}
    \item A $\hat{p}$ which satisfies $\abs{p^*-\hat{p}} \leq 1/p_1 (n)$, where $p^*$ is the maximum probability that $U_n$ accepts $(x,\ket{\psi})$, where the maximum is over the witnesses $\ket{\psi} \in \left(\mathbb{C}^2\right)^{\otimes \poly(n)}$.
    \item A set of $q$-local density matrices $\{\rho_{i_q,\dots,i_q}\}$ whose elements are at least $1/p_2(n)$-close in trace distance to the density matrices of some $\xi_{\textup{proof}}$ which $\QMA$-verifier accepts with probability at least $\tilde{p} \geq p^*-1/p_1(n)$.
\end{itemize}
\label{thm:main}
\end{theorem}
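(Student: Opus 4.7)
The plan is to verify the correctness of~\cref{alg2} by chaining together the two main technical components developed earlier: the approximately witness-preserving reduction of~\cref{thm:approx_wit} (together with its mixed-state companion~\cref{cor:thm_red_mixed_state}) and the marginal-finding algorithm of~\cref{sec:marginals_LH} (\cref{thm:H_alg_dens} in the randomized case, or~\cref{cor:H_alg_dens_deterministic} in the derandomized case). The parameters $M$, $\epsilon_{\textup{penalty}}$, $a$, and $\epsilon$ chosen in~\cref{alg2} are exactly those dictated by~\cref{thm:approx_wit}, and all of them remain $1/\poly(n)$ or $\poly(n)$ since $\tilde{T} = M + T = \poly(n)$ and $p_1, p_2$ are polynomially bounded, so polynomial runtime will be automatic.

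First I would form the pre-idled verifier $\tilde{U}_n$ and the corresponding Hamiltonian $H_{\textup{FK}}^x$, then invoke~\cref{cor:H_alg_dens_deterministic} (or~\cref{thm:H_alg_dens}) on $H_{\textup{FK}}^x$ with locality $q$, energy slack $a = c_0(\tilde{T}+1)^2 \epsilon_{\textup{penalty}}^2$ and trace-distance accuracy $\epsilon = 1/(2 p_2(n))$. This returns (i) an estimate $\hat{\lambda}_0(H_{\textup{FK}}^x)$ of the ground state energy, accurate to any desired inverse polynomial via the binary search of~\cref{alg1}, and (ii) a collection of $q$-local density matrices $\{\rho_{i_1,\dots,i_q}\}_{(i_1,\dots,i_q) \in J}$ that are within trace distance $\epsilon$ of the corresponding marginals of some global state $\xi$ with $\tr[H_{\textup{FK}}^x\, \xi] \leq \lambda_0(H_{\textup{FK}}^x) + a$. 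Since $a$ matches the energy threshold in~\cref{cor:thm_red_mixed_state} (up to the immaterial $(\tilde{T}+1)^2$ vs.\ $\tilde{T}^2$), I can apply that corollary to $\xi$ to obtain a proof $\xi_{\textup{proof}}$ on the witness register $W$ such that $\norm{\tr_{\overline{W}}[\xi] - \xi_{\textup{proof}}}_1 \leq 1/(2 p_2(n))$ and $U_n$ accepts $(x, \xi_{\textup{proof}})$ with probability at least $p^* - 1/p_1(n)$.

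Restricting the output of~\cref{alg1} to the sub-collection $J_W \subset J$ of index tuples lying inside the proof register and applying the triangle inequality (together with the fact that the partial trace is a contraction in trace distance) gives $\norm{\rho_{i_1,\dots,i_q} - \tr_{\overline{C}_j}[\xi_{\textup{proof}}]}_1 \leq 1/p_2(n)$ for every output marginal, which is the second bullet. For the first bullet, I would invert the formula from~\cref{lem:spcc}: the ground state energy satisfies $\lambda_0(H_{\textup{FK}}^x) \in \bigl[\epsilon_{\textup{penalty}}(1-p^*)/(\tilde{T}+1) \pm c_0 \epsilon_{\textup{penalty}}^2/\Delta \bigr]$, so setting $\hat{p} := 1 - \hat{\lambda}_0(H_{\textup{FK}}^x)(\tilde{T}+1)/\epsilon_{\textup{penalty}}$ and combining the Schrieffer--Wolff error with the binary-search error (which I choose to be an inverse polynomial smaller than the other scales) yields $|\hat{p} - p^*| \leq 1/p_1(n)$ for the prescribed choice of $\epsilon_{\textup{penalty}}$.

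The main obstacle is purely a bookkeeping one: ensuring that the chain of inverse-polynomial parameters lines up consistently. Concretely, $\epsilon_{\textup{penalty}}$ is of order $1/\bigl((\tilde{T}+1)^4 (p_1 p_2)^2\bigr)$, so the allowed energy slack $a$ is of order $\epsilon_{\textup{penalty}}^2 \tilde{T}^2$, and the binary-search precision $\delta$ that feeds into~\cref{alg1} must be chosen strictly smaller than this slack so that the estimated $\hat{\lambda}_0(H_{\textup{FK}}^x)$ is still compatible with the hypothesis of~\cref{cor:thm_red_mixed_state}. Since all of these quantities are $1/\poly(n)$ and the runtime of~\cref{alg1} is polynomial in their inverses, the overall algorithm runs in polynomial time; the success-probability qualifier $\geq 2/3$ carries over from~\cref{thm:H_alg_dens} in the randomized case and is vacuous in the derandomized case using~\cref{cor:H_alg_dens_deterministic}.
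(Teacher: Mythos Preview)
Your proposal is correct and follows essentially the same route as the paper: run \cref{alg1} on the small-penalty Feynman--Kitaev Hamiltonian of the pre-idled verifier with the parameter choices from \cref{thm:approx_wit}, apply \cref{cor:thm_red_mixed_state} to the resulting low-energy mixed state $\xi$ to extract a near-optimal proof $\xi_{\textup{proof}}$, use the triangle inequality (plus contractivity of the partial trace) for the marginal bound, and invert the energy formula from \cref{lem:spcc} for $\hat p$. Your remarks about the parameter bookkeeping (binary-search precision $\delta$ strictly below the slack $a$, the harmless $(\tilde T+1)^2$ versus $\tilde T^2$) are exactly the loose ends the paper handles implicitly.
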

\begin{proof}
We will prove that~\cref{alg2} satisfies the criteria of the theorem.
\paragraph{Correctness} Suppose $H_\textup{FK}^x$ acts on $p_3(n) = \poly(n)$ qubits. By~\cref{thm:H_alg_dens} we have that the density matrices $\{\rho_{i_1,\dots,i_q}\}_{i_1,\dots,i_q \in I}$ come from a state $\xi$ that has energy at most 
\begin{align*}
   \tr[H_\textup{FK}^x \xi] \leq \lambda_0 (H_\textup{FK}^x ) + a =  \lambda_0 (H_\textup{FK}^x ) + c_0 (\tilde{T}+1)^2\epsilon^2_{\textup{penalty}},
\end{align*}
satisfying the conditions of~\cref{thm:approx_wit} (and thus~\cref{cor:thm_red_mixed_state}). Therefore, we have 
 $   \abs{ \tilde{p}- p^*}  \leq 1/p_1(n) $ for some proof $\xi_{\textup{proof}}$. 
By~\cref{lem:spcc}, we have that the ground state energy estimate of $H_\text{FK}^{x}$ satisfies
\begin{align*}
    \hat{\lambda}_0(H_\textup{FK}^x) \in \left[{\epsilon_\textup{penalty}} \frac{1-p^*}{T+1}  \pm \left(c_0 \frac{\epsilon^2_{\textup{penalty}}}{\Delta} + \frac{a}{|I|+1}\right)\right]
\end{align*}
which implies 
\begin{align*}
    p^*\in \left[1 - \frac{\hat{\lambda}_0(H_\textup{FK}^x)  (\tilde{T} + 1)}{\epsilon_\textup{penalty}} \pm 2c_0 \epsilon_\textup{penalty} (\tilde{T} + 1)^2 \right]
\end{align*}
using our choice of $a$, the fact that $|I| \geq 1$ and  the bound on $\Delta$ from~\cref{lem:spec_gap_H0}. Since
\begin{align*}
    \hat{p} = 1 - \frac{\hat{\lambda}_0(H_\textup{FK}^x)  (\tilde{T} + 1)}{\epsilon_\textup{penalty}},
\end{align*}
we have that for our choice of $\epsilon_{\textup{penalty}}$,
\begin{align*}
    \abs{p^*-\hat{p}} &\leq 2c_0 \epsilon_\textup{penalty}(\tilde{T} + 1^3) \leq 1/p_1(n).
\end{align*}
Moreover, by~\cref{thm:H_alg_dens}, we know that~\cref{alg1} returns all $q$-local density matrices from qubits $J \supset J_W$, and all of them satisfy 
$
            \norm{\rho_{i_1,\dots,i_q} - \tr_{[p_3(n)]\setminus \{i_1,\dots,i_q\} } [\xi]}_1 \leq 1/2p_2(n),
$
which combined with~\cref{cor:thm_red_mixed_state} and the triangle inequality gives
\begin{align*}
    \norm{\rho_{i_1,\dots,i_q} - \tr_{[p_3(n)]\setminus \{i_1,\dots,i_q\}} [\xi]}_1 &\leq  \norm{\rho_{i_1,\dots,i_q}  - \tr_{[p_3(n)]\setminus \{i_1,\dots,i_q\}} [\xi]}_1  +  \\
            &\qquad \norm{\tr_{[p_3(n)]\setminus \{i_1,\dots,i_q\}} [\xi] - \tr_{[p_3(n)]\setminus \{i_1,\dots,i_q\}} [\xi_{\textup{proof}}] }_1 \\
            &\leq 1/p_2(n).
\end{align*} 

\paragraph{Complexity} The complexity is polynomial in $2^q$ and $1/\epsilon$. Since $\epsilon = 1/\poly(n)$ and $q \in \mO(1)$, the overall runtime is polynomial for both the randomized~(\cref{thm:H_alg_dens}) and derandomized version (\cref{cor:H_alg_dens_deterministic}).    
\end{proof}

\subsection*{Acknowledgements}
The author would like to thank Florian Speelman for helpful comments on an earlier draft, Sevag Gharibian for a pointer towards~\cite{liu2007consistency} and anonymous reviewers for useful feedback on the first version of this work. JW was supported by the Dutch Ministry of Economic Affairs and Climate Policy (EZK), as part of the Quantum Delta NL programme.

\bibliography{main.bib}
\bibliographystyle{alpha}

\appendix

\section{Proof of~\cref{lem:spec_gap_H0}}
\label{app:proof_gap}
\spectralgap*
\begin{proof}
We follow~\cite{Kitaev2002ClassicalAQ} to inspect the spectrum of $H_\textup{prop}$. Applying a basis transformation of $W = \sum_{t=0}^{T} U_t \dots U_1 \otimes \ket{j} \bra{j}$ to $H_\text{prop}$ gives us 
    \begin{align*}
        W^{\dagger} H_\text{prop} W  = \sum_{t = 1}^T I \otimes E_t = I \otimes E
    \end{align*}
    where $E_t = \frac{1}{2} \left( - \kb{t}{t-1} -\kb{t-1}{t} + \kb{t}{t} + \kb{t-1}{t-1} \right)$ and thus
    \begin{align*}
        E = \begin{pmatrix}
            \frac{1}{2} & -  \frac{1}{2} & &  &  & 0 \\
            - \frac{1}{2} & 1  & - \frac{1}{2} &  &  & \\
           & - \frac{1}{2} & 1  & - \frac{1}{2}  &  &  \\
          &   & - \frac{1}{2} &  \ddots  & \ddots  & \\
          &   & &  \ddots &  1  & - \frac{1}{2}\\
           0 &   & &  &  -\frac{1}{2} & \frac{1}{2}
        \end{pmatrix}.
    \end{align*}
$E$ is the Laplacian of random walk on a line with $T+1$ nodes, which has eigenvalues
\begin{align*}
    \lambda_k = 1 - \cos\left( \frac{\pi k}{T+1}\right),
\end{align*}
with $0 \leq k \leq L$~\cite{Spielman2009}. Hence, its smallest non-zero eigenvalue is lower bounded by
\begin{align*}
    \lambda_1 (E) \geq 1 - \cos\left( \frac{\pi }{T+1}\right) \geq \frac{1}{3} \left( \frac{\pi }{T+1}\right) ^2 \geq \frac{1}{(T+1)^2}.
\end{align*}
Write $\mathcal{N}(H_0)$ and $\mathcal{N}^{\perp}(H_0)$ for the null space of $H_0$ and the space orthogonal to it. Since the null space of $H_0$ is spanned by history states~\cite{Kitaev2002ClassicalAQ}, we have that for any state $\ket{\phi} \in \mathcal{N}^{\perp}(H_0)$ it must hold that
\begin{align*}
    \bra{\phi} H_0 \ket{\phi} =  \bra{\phi} H_\text{in} \ket{\phi} + \bra{\phi} H_\text{clock}  \ket{\phi} + \bra{\phi}  H_\text{prop} \ket{\phi} \geq \lambda_1(E) \geq \frac{1}{(T+1)^2},
\end{align*}
using that $H_\textup{clock}$ and $H_\textup{in}$ are PSD and $H_\textup{prop}$ has the same smallest non-zero eigenvalue as $E$ (as the spectrum is preserved under basis transformations). Hence, the spectral gap $\Delta$ of $H_0$ satisfies  $
    \Delta \geq \frac{1}{(T+1)^2}$,
completing the proof.
\end{proof}

\end{document}